\documentclass[aps,reprint,pra,nofootinbib,superscriptaddress,amsmath,amssymb]{revtex4-2}
\usepackage{subfiles}

\usepackage[utf8]{inputenc}
\usepackage{amsmath, amssymb, amsfonts,amsthm,amsopn}
\usepackage{dsfont}
\usepackage{graphicx}
\usepackage{thm-restate}
\usepackage[normalem]{ulem}
\usepackage{array}
\usepackage{amssymb}
\usepackage{comment}
\usepackage{tikz}
\usepackage{mathtools}
\usepackage{xcolor}
\usepackage{braket}
\usepackage{bm}

\usepackage[breaklinks]{hyperref}
\hypersetup{
    colorlinks,
    linkcolor={red!40!black},
    citecolor={blue!50!black},
    urlcolor={blue!20!black}
}
\usepackage[capitalise]{cleveref} 
\crefformat{equation}{Eq.~(#2#1#3)}
\crefformat{section}{Sec.~#2#1#3} 
\Crefformat{equation}{Equation~(#2#1#3)}
\crefformat{figure}{Fig.~#2#1#3}
\crefrangeformat{equation}{Eqs.~#3(#1)#4--#5(#2)#6}
\Crefformat{section}{Section~#2#1#3}

\DeclareMathOperator{\tr}{tr}

\renewcommand{\ket}[1]{\mathinner{|#1\rangle}}

\newcommand{\dyad}[1]{|#1\rangle \!  \langle #1|}

\renewcommand{\a}{\alpha}

\newtheorem{theorem}{Theorem}
\newtheorem*{theorem*}{Theorem}
\newtheorem{proposition}[theorem]{Proposition}
\newtheorem{lemma}[theorem]{Lemma}
\newtheorem{definition}[theorem]{Definition}
\newtheorem{corollary}[theorem]{Corollary}

\newtheorem{example}[theorem]{Example}
\newtheorem{remark}[theorem]{Remark}

\newtheorem*{problem*}{Problem}
\newtheorem*{question*}{Question}

\newtheorem*{result*}{Result}

\providecommand{\ignore}[1]{}

\newif\ifcmnt
\cmnttrue
\ifdefined\cmntsoff\cmntfalse\fi

\ifcmnt

\providecommand{\aucmnt}[1]{#1}

\providecommand{\aucmnt}[1]{#1}

\else
    \providecommand{\aucmnt}[1]{}

\fi

\newcommand{\one}{\mathds{1}}

\ignore{

\theoremstyle{definition}

\numberwithin{equation}{section}

\renewcommand\bra[1]{{\langle{#1}|}}
\renewcommand\ket[1]{{|{#1}\rangle}}

\numberwithin{equation}{section}
\newtheorem{theorem}{Theorem}[section]
\newtheorem*{theorem*}{Theorem}
\newtheorem{proposition}[theorem]{Proposition}
\newtheorem*{proposition*}{Proposition}
\newtheorem{lemma}[theorem]{Lemma}
\newtheorem*{lemma*}{Lemma}

}

\begin{document}

\title{Constraints on recovering quantum information after erasure}

\author{Mohammad A. Alhejji}
\affiliation{Center for Quantum Information and Control, University of New Mexico, Albuquerque, NM 87131, USA}
\affiliation{Department of Physics and Astronomy, University of New Mexico, Albuquerque, NM 87131, USA}

\author{Noah Lordi}
\affiliation{Center for Quantum Information and Control, University of New Mexico, Albuquerque, NM 87131, USA}
\affiliation{Department of Physics and Astronomy, University of New Mexico, Albuquerque, NM 87131, USA}

\author{Omkar Baraskar}
\affiliation{Cheriton School of Computer Science, University of Waterloo, Waterloo, ON N2L 3G1, Canada}

\author{Ariel Shlosberg}
\affiliation{Center for Quantum Information and Control, University of New Mexico, Albuquerque, NM 87131, USA}
\affiliation{Department of Physics and Astronomy, University of New Mexico, Albuquerque, NM 87131, USA}

\author{Emanuel Knill}
\affiliation{Applied and Computational Mathematics Division, National Institute of Standards and Technology, Boulder, CO, 80305, USA}
\affiliation{Department of Physics, University of Colorado, Boulder, CO, 80309, USA}
\affiliation{Center for Theory of Quantum Matter, University of Colorado, Boulder, CO, 80305, USA}

\date{\today}

\begin{abstract}
Suppose that we encode quantum information into \(n\) physical carriers. With what probabilities can we perfectly recover the quantum information after sets of carriers have been erased? This question arises in the study of quantum erasure channels, for which M.B. Hastings conjectured that if each carrier is independently erased with probability \(t \geq 1/2\), then it is impossible to recover the quantum information with asymptotically vanishing error and probability \(> 1-t\). We prove this conjecture by establishing new no-cloning constraints on realizable recovery probabilities. 
\end{abstract}

\maketitle

\noindent\textit{Introduction}. A striking feature of quantum information is that it cannot be cloned~\cite{Wootters1982}, from which it follows that the minimum distance of a quantum code is less than half its length plus one.

In general, a quantum code of length \(n\) and dimension \(K > 1\) is
a physical operation that encodes a \(K\)-dimensional quantum system
into \(n\) information-carrier quantum systems. The most commonly
studied quantum codes are isometric, that is, subspace codes. The
minimum distance of a quantum code is the largest number \(d\) such
that there is a physical operation that perfectly recovers the quantum
information after erasure of any carrier set of size
\(\leq d-1\)~\cite{Rains1998}. Equivalently, \(d\) is the largest
number such that if it is possible to detect which carriers suffered
errors and there are at most \(d-1\) of them, then the quantum
information can be perfectly recovered with probability
\(1\). We refer to a quantum code whose length is \(n\), dimension is \(K\), and minimum distance is \(d\) as an \(((n,K,d))\) code. We consider carriers of finite but otherwise
unrestricted dimensions.

The no-cloning bound on minimum distance, \(d < n/2 + 1\), is realizable in the sense that for every \(n\) and \(d\) that satisfy the bound and every \(K\), there exists, at the cost of large carrier dimension, an \(((n, K, d))\) code~\cite{Grassl2004}. Minimum distance is a measure of worst-case performance; perfect recovery must be possible after erasure of any set of \(d-1\) or fewer carriers. Hence, characterizations of code performance based on minimum distance can be coarse. For a given \(((n,K,d))\) code, perfect recovery may be possible after erasure of some carrier sets of size \(> d-1\). This is the case for the \(((n, 4, \sqrt{n/2})) \) qubit toric code~\cite{KITAEV2003}. It can correct for erasure of any contractible arrangement of qubits and such arrangements can contain many more qubits than \(\sqrt{n/2}-1\).

For a fine-grained characterization of code performance, we can associate each carrier set \(S\) with the largest probability of perfectly recovering the quantum information from the carriers in \(S\). This invites the question: given a tuple of \(2^n\) probabilities \(q= (q_{S})_{S}\), is there a quantum code of length \(n\) such that for each carrier set \(S\) there exists a probabilistic physical operation on the carriers in \(S\) occurring with probability \(q_S\) that perfectly recovers the quantum information? We say that such a code realizes the tuple \(q\). For example, a quantum code of length \(n\) and minimum distance \(d\) realizes a tuple \(q\) for which \(q_{S}=1\) for all carrier sets \(S\) of size at least \(n-d+1\). In this Letter, we investigate the sets \(\mathcal{Q}_{n}\) of tuples \(q\) for which there exists a quantum code of length \(n\) that realizes \(q\).

A simple observation is that if every carrier is erased, then the probability of recovering the quantum information is zero. Hence, the empty set coordinate is zero for all tuples in \(\mathcal{Q}_n\). No-cloning implies that each tuple \(q \in \mathcal{Q}_n\) satisfies \(\sum_{S_i} q_{S_i} \leq 1\) whenever the sets \(S_i\) are disjoint; this is a corollary of \cref{eq: recovery operator for partition bound} below. We show that for \(n\leq 4\), \(\mathcal{Q}_{n}\) is exactly the set of tuples with nonnegative coordinates that satisfy these constraints. For \(n>4\), we show that there must be additional constraints. Furthermore, we prove new no-cloning constraints on recovery probabilities. Specifically, we prove that the sum of recovery probabilities associated with a family of sets is bounded from above by the Lovász number~\cite{Lovasz1979} of the family's disjointness graph.

Our investigation of set recovery probabilities is motivated by the erasure-simulation conjecture of M.B. Hastings~\cite{hastings2014notes}, which is a significant open problem in the study of quantum channel simulation~\cite{Bennett2014}. The quantum erasure channel \(\mathcal{E}_{t}\) with erasure probability \(t\) acts as the identity channel with probability \(1-t\) and otherwise outputs an erasure flag orthogonal to the input state space. We say that \(\mathcal{E}_{t}\) can simulate \(\mathcal{E}_{r}\) if there are sequences of quantum channels \(\mathcal{A}_{n}\) and \(\mathcal{D}_{n}\) such that \(\mathcal{D}_n \mathcal{E}^{\otimes n}_t \mathcal{A}_n\) converges to \(\mathcal{E}_{r}\) in diamond norm as \(n\) goes to infinity. Not surprisingly, if \(r\geq t\), then \(\mathcal{E}_{t}\) can simulate \(\mathcal{E}_r\). In the reverse case, it is a fundamental result of the theory of quantum error correction~\cite{Bennett1997} that for \(t<1/2\), \(\mathcal{E}_{t}\) can simulate the identity channel \(\mathcal{E}_{0}\). When \(t \geq 1/2\), does there exist \(r<t\) such that \(\mathcal{E}_{t}\) can simulate \(\mathcal{E}_{r}\)? Hastings posed this question in Ref.~\cite{hastings2014notes} with the goal of gaining insight into simulability as an equivalence relation on quantum channels. For \(r < 1/2\), the answer is ``no'' because a quantum channel with zero quantum capacity such as \(\mathcal{E}_t\) cannot simulate a quantum channel with positive quantum capacity such as \(\mathcal{E}_r\). In particular, \(\mathcal{E}_{1/2}\) cannot simulate an erasure channel with erasure probability less than \(1/2\).  By applying a no-cloning argument, Hastings established that for each positive integer \(m\), \(\mathcal{E}_{1-1/m}\) cannot simulate an erasure channel with erasure probability less than \(1-1/m\). According to the erasure-simulation conjecture, this result generalizes from \(\mathcal{E}_{1-1/m}\) to \(\mathcal{E}_{t}\) for all \(t \geq 1/2\). We prove this conjecture by applying our results constraining \(\mathcal{Q}_{n}\).

\vspace{\baselineskip} 
\noindent\textit{Basic properties of \(\mathcal{Q}_n\)}. We label carriers by elements of the set \([n] := \{1,2,\ldots,n\}\). Accordingly, we use subsets of \([n]\) to refer to carrier sets and to label the coordinates of tuples.
The set of realizable tuples \(\mathcal{Q}_{n}\) implicitly depends on \(K\), the dimension of the encoded quantum system, which we assume
to be strictly greater than \(1\). For \(K'\leq K\), a \(K'\)-dimensional system can be encoded as a subspace of a
\(K\)-dimensional system. Therefore, the set \(\mathcal{Q}_n\) does not grow with \(K\); we do not know if it
shrinks or stays the same. In any case, our results are independent of
\(K\) for \(K > 1\).

The set \(\mathcal{Q}_{n}\) is convex. For \(q^{(0)}, q^{(1)} \in \mathcal{Q}_n\) and \(w \in [0,1]\), the convex mixture \((1-w) q^{(0)} + w q^{(1)}\) is realized by a mixture of quantum codes realizing \(q^{(0)}\) and \(q^{(1)}\). In more detail, the encoder samples a binary random variable \(X\) such that \(\Pr (X=1) = w\) and encodes to realize \(q^{(X)}\). In addition to this probabilistic encoding, the encoder expands each carrier by attaching one qubit in the state \(\dyad{X}\). To recover the encoded system from a nonempty carrier set, the decoder applies the recovery operation appropriate for the code indicated by the attached qubits. Convexity follows because the carrier dimensions are unbounded.  For the same reason, it is possible that \(\mathcal{Q}_{n}\) is not topologically closed in general.

Next, we observe that the decoder can choose to apply a quantum erasure channel after a recovery
operation to decrease the probability of recovery.  This implies
that if \(q\) is in \(\mathcal{Q}_n\), then every tuple \(q'\)
satisfying \(0 \leq q'_S \leq q_S\) for all \(S \subseteq [n]\) is
also in \(\mathcal{Q}_n\). In other words, the set \(\mathcal{Q}_n\)
is downward closed in the coordinate-wise order on tuples with
nonnegative coordinates.

Every \(q \in \mathcal{Q}_n\) is realized by an isometric quantum channel encoding the quantum information into a subspace code. Recall that a quantum channel is a linear completely positive trace preserving map and that a quantum channel \(\mathcal{U}\) is isometric if and only if \(\mathcal{U} (\cdot) = U (\cdot) U^\dagger\) for some isometry \(U\). Stinespring's dilation theorem~\cite{Paulsen2009-jq} implies that every quantum channel \(\mathcal{N}\) can be dilated to an isometric quantum channel \(\mathcal{U}_{\mathcal{N}}\) in the sense that \(\mathcal{N} = \tr_E \circ \; \mathcal{U}_{\mathcal{N}}\), where \(E\) denotes an auxiliary system output by \(\mathcal{U}_{\mathcal{N}}\) commonly known as the environment of the dilation. Let \(\mathcal{A}\) be an encoding channel in some realization of \(q\) and let \(\mathcal{U}_{\mathcal{A}}\) be an isometric dilation of \(\mathcal{A}\). To realize \(q\), the encoder encodes according to \(\mathcal{U}_{\mathcal{A}}\) and attaches the environment system to one of the carriers in the output of \(\mathcal{A}\), increasing this carrier's dimension. The decoder discards the environment system if it has not been erased already and then applies recovery operations which together with \(\mathcal{A}\) realize \(q\). If \(U_\mathcal{A}\) is an isometry such that \(\mathcal{U}_\mathcal{A} (\cdot) = U_\mathcal{A} (\cdot) U_\mathcal{A}^\dagger\), then the subspace code is the image of \(U_\mathcal{A}\).

No-cloning implies that for \(q\in \mathcal{Q}_{n}\), if \(S_1,S_2\) are disjoint carrier sets, then \(q_{S_1}+q_{S_2}\leq 1\). This can be seen from the following thought experiment. After encoding the quantum information to realize \(q\), the recovery operations corresponding to \(S_1\) and \(S_2\) may be run in parallel. This is because \(S_1\) and \(S_2\) are disjoint. If \(q_{S_1} + q_{S_2} > 1\), then there would be a positive probability that the quantum information is perfectly recovered from the two disjoint sets simultaneously!  As this would violate no-cloning, we conclude that \(q_{S_1} + q_{S_2} \leq 1\). More generally, for every partition \(\mathcal{P}\) of \([n]\) and every tuple \(q \in \mathcal{Q}_n\), it must hold that
\begin{align}
\label{ee: partition ineq}
    \sum_{S \in \mathcal{P}} q_S \leq 1. 
\end{align}
This inequality is a formal corollary of  \cref{eq: recovery operator for partition bound} below. We denote by \(\mathcal{T}_{n}\) the set of tuples \(x\) which satisfy \(x_\varnothing = 0\), \(0 \leq x_S \leq 1\) for all \(S \subseteq [n]\), and for every partition \(\mathcal{P}\) of \([n]\), \(\sum_{S \in \mathcal{P}} x_S \leq 1.\) The inclusion \(\mathcal{Q}_n \subseteq \mathcal{T}_n\) is immediate from the foregoing considerations.

\vspace*{\baselineskip}
\noindent\textit{\(\mathcal{Q}_{n}=\mathcal{T}_{n}\) for \(n\leq 4\)}. We use Sage~\cite{sagemath} to compute the extreme  points of \(\mathcal{T}_n\), then we show how to realize every one of them, and appeal to the convexity of \(\mathcal{Q}_n\). We consider the \(n=1,2,3,4\) cases in order. The extreme points of \(\mathcal{T}_{1}\) are \(x^{(0)}\) with \(x^{(0)}_{\varnothing}=0\) and \(x^{(0)}_{\{1\}}=0\), and \(x^{(1)}\) with \(x^{(1)}_{\varnothing}=0\) and \(x^{(1)}_{\{1\}}=1\).  The first is trivially realizable, and the second is realized by isometrically encoding the quantum information into a \(K\)-dimensional carrier.

The number of the extreme points of \(\mathcal{T}_{n}\) grows very rapidly with \(n\) so we prune them as follows. First, we may consider extreme points up to permutations of the carriers. Second, if \(x^{(0)}\) and \(x^{(1)}\) are extreme points of \(\mathcal{T}_n\) such that \(x^{(0)}_S \leq x_S^{(1)}\) for all \(S \subseteq [n]\) and \(x^{(1)}\) is realizable, then, because \(\mathcal{Q}_n\) is downward closed in the coordinate-wise order, \(x^{(0)}\) is realizable as well. So we need only address the realizability of \(x^{(1)}\). Third, it is enough to only consider extreme points \(x \in \mathcal{T}_n\) with the property that for every \( i \in [n]\) there exists some \(S \subseteq [n]\) such that \(x_{S \cup \{ i \}} > x_S\). Suppose that \(x\) is an extreme point that does not have this property. Without loss of generality, assume that it is \(n\) which satisfies \(x_{S \cup \{n\}} \leq x_{S}\) for all \(S \subseteq [n]\). Then, the tuple \(x':=(x_{S})_{S\subseteq[n-1]}\) is in \(\mathcal{T}_{n-1}\). Assuming that \(\mathcal{Q}_{n-1} = \mathcal{T}_{n-1}\) holds, this implies that \(x'\) is in \(\mathcal{Q}_{n-1}\). To realize \(x\), we may use a quantum code realizing \(x'\) to encode the quantum information into the first \(n-1\) carriers and append the \(n\)th carrier in some fixed state. For the recovery operation from \(S \subseteq [n]\), we discard the last carrier, apply a recovery operation to realize \(x'_{S \setminus \{n\}}\), and follow with an erasure channel if necessary to reduce the probability of recovery.

We find that \(\mathcal{T}_2\) has \(6\) extreme points, none of which remain after pruning. \(\mathcal{T}_3\) has \(40\) extreme points and only one, \(x^{(3)}\), remains after pruning. It is given by \(x_S^{(3)} = 1\) if \(|S| \geq 2\), and \(x_S^{(3)} = 0\) otherwise. This extreme point is realized by a \(((3, K, 2))\) code. Lastly, \(\mathcal{T}_4\) has \(1376\) extreme points and only one, \(x^{(4)}\), remains after pruning. It is given by \(x_S^{(4)} = 1\) if \(|S| \geq 3\) or \(|S| = 2\) and \(1 \in S\), and \(x_S^{(4)} = 0\) otherwise. We can realize this extreme point by encoding the quantum information into an adjusted \(((5,K,3))\) code where the first two of the five qudits are bundled together to form the first carrier.

\vspace*{\baselineskip}
\noindent\textit{Partition constraints on \(\mathcal{Q}_{n}\)}. The probabilistic physical recovery operations in the definition of \(\mathcal{Q}_{n}\) are modeled with quantum operations from the state spaces of the carrier-set systems to the decoded \(K\)-dimensional system. A quantum operation is a linear, completely positive map that does not increase the trace of any positive semidefinite operator in its domain. For every quantum operation \(\mathcal{R}\), there exists a positive semidefinite operator \(R \leq \one\) and a quantum channel \(\mathcal{J}\) such that \(\mathcal{R}(\cdot)=\mathcal{J}(R (\cdot) R)\). Since quantum channels are trace preserving, for a state \(\rho\) we can identify \(\tr(R\rho R)\) with the probability that \(\mathcal{R}\) is successfully applied to \(\rho\). Quantum operations can be realized with post-selection. See App.~\ref{sec: lemma} for more details.

For every realizable tuple of recovery probabilities \(q \in \mathcal{Q}_n\), there exists a subspace code \(\mathcal{C}\) with the property that for every carrier set \(S\), there exists a quantum operation \(\mathcal{R}_{S}\) acting on the carriers in \(S\) that perfectly recovers the quantum information encoded in \(\mathcal{C}\) with probability \(q_{S}\). According to the above, there exists a positive semidefinite operator \(R_{S} \leq \one\) and a quantum channel \(\mathcal{J}_{S}\) on the carriers in \(S\) such that \(\mathcal{R}_{S}(\cdot)= \mathcal{J}_{S}(R_{S} (\cdot) R_{S})\). This implies that \(\tr(R_{S}\rho R_{S}) = q_S\) for every state \(\rho\) supported in \(\mathcal{C}\). In particular, for every unit vector \(\ket{\psi} \in \mathcal{C}\), \(q_{S}=\bra{\psi}R_{S}^{2}\ket{\psi} = || R_S \ket{\psi}||^2 \). As a consequence, when \(q_S > 0\), the operator \(R_{S}/\sqrt{q_{S}}\) restricted to \(\mathcal{C}\) is an isometry. Furthermore, the image \(R_S \mathcal{C}\) is a \(K\)-dimensional subspace that can correct for erasure of the carriers in the complement \(S^c\), see \cref{eq: cond quantum error correction} in the appendix. The quantum error correction conditions for \(\tr_{S^c} \) imply that for any operator \(X_{S^c}\) on the carriers in \(S^c\), the restriction of \(X_{S^c}\) to \(R_S \mathcal{C}\) is proportional to an isometry. Since \(R_S\) is proportional to an isometry on \(\mathcal{C}\), this implies that the product \(R_S X_{S^c}\) is proportional to an isometry on \(\mathcal{C}\).

Let \(S_1\) and \(S_2\) be disjoint carrier sets. Then, we can apply the quantum operations \(\mathcal{R}_{S_1}\) and
\(\mathcal{R}_{S_2}\) in parallel to states supported in \(\mathcal{C}\). For a unit vector \(\ket{\psi} \in \mathcal{C}\), the probability that both operations are applied to \(\dyad{\psi}\) is \(\tr(R_{S_1}R_{S_2}\dyad{\psi} R_{S_2}R_{S_1})\) which equals \(|| R_{S_1} R_{S_2} \ket{\psi}||^2\). By the above, \(R_{S_1} R_{S_2}\) is proportional to an isometry on \(\mathcal{C}\) and so the probability \(|| R_{S_1} R_{S_2} \ket{\psi}||^2\) is independent of \(\ket{\psi}\). Suppose toward a contradiction that it is positive. After applying \(\mathcal{R}_{S_1}\) to \(\dyad{\psi}\), the state on the
\(K\)-dimensional output of \(\mathcal{R}_{S_1}\) is pure and factors from the state of the carriers in \(S_1^{c}\). In particular,
except for scale, the state of the output of \(\mathcal{R}_{S_1}\) is unchanged if we then apply \(\mathcal{R}_{S_2}\) to the carriers in \(S_2\), and vice versa. Consequently, after applying the quantum operations in parallel, the marginal state on each of the two \(K\)-dimensional outputs is pure and isometrically equivalent to \(\dyad{\psi}\).  No-cloning prevents this from being the case for all \(\ket{\psi} \in \mathcal{C}\), from which we conclude that \(R_{S_1}R_{S_2}\mathcal{C}=0\). It follows from this that for every partition \(\mathcal{P}\) of \([n]\), it holds that
\begin{align}
  \sum_{S \in \mathcal{P}} q_S &= \bra{\psi} \sum_{S \in \mathcal{P}} R_S^2 \ket{\psi} \leq \bra{\psi} \sum_{S \in \mathcal{P}} R_S \ket{\psi} \notag\\
&\leq \braket{\psi|\psi}^{1/2} \bra{\psi}\sum_{S,S'\in\mathcal{P}}R_{S}R_{S'}\ket{\psi}^{1/2} \notag \\
& =  \bra{\psi}\sum_{S \in \mathcal{P}} R_S^2 \ket{\psi}^{1/2} = \left(\sum_{S \in \mathcal{P}} q_S \right)^{1/2}. \label{eq: recovery operator for partition bound}
\end{align}
This chain of inequalities implies that \(\sum_{S \in \mathcal{P}} q_S \leq 1\), with equality if and only if for all \(\ket{\psi} \in \mathcal{C}\), \(\ket{\psi} = \sum_{S \in \mathcal{P}} R_S \ket{\psi}\) and  \(\bra{\psi} R_S -R_S^2 \ket{\psi}= 0\) for all \(S \in \mathcal{P}\).

\vspace*{\baselineskip}
\noindent\textit{For \(n>4\), \(\mathcal{Q}_{n}\subsetneq\mathcal{T}_{n}\)}. Now we exhibit a tuple in \(\mathcal{T}_5\) that is not in \(\mathcal{Q}_5\). We refer to a set containing exactly \(k\) elements as a \(k\)-set. Let \(u \in \mathcal{T}_5\) denote the tuple satisfying \(u_{S} = 1/2\) when \(S \subseteq \{1,2,3,4,5\}\) is a \(2\)-set and \(u_S = 0\) otherwise.  Suppose toward a contradiction that this tuple is realizable. Then there would exist a subspace code \(\mathcal{C}\) and recovery operators \(R_S\), one for each \(2\)-set \(S \subseteq \{1,2,3,4,5\}\), that each occur with probability \(1/2\) on \(\mathcal{C}\). For a unit vector \(\ket{\psi} \in \mathcal{C}\),  by the previous paragraph, for each pair of disjoint \(2\)-sets \(S\) and \(S'\), it holds that
\begin{align}
\ket{\psi}  = R_S \ket{\psi} + R_{S'} \ket{\psi}.
\label{eq: proj decomp}
\end{align}
We obtain from Eq.~\eqref{eq: proj decomp} the equalities \(\ket{\psi} - R_{\{4,5\}} \ket{\psi} = R_{\{1,2\}} \ket{\psi} = R_{\{2,3\}} \ket{\psi}\) and \(\ket{\psi}-R_{\{5,1\}} \ket{\psi} = R_{\{2,3\}} \ket{\psi} = R_{\{3,4\}} \ket{\psi}\), from which we conclude the absurd 
\begin{align}
R_{\{1,2\}} \ket{\psi} = R_{\{3,4\}} \ket{\psi}!
\end{align}
Thus, \(u \notin \mathcal{Q}_5\). To show that \(\mathcal{Q}_{n}\subsetneq\mathcal{T}_{n}\) for \(n > 5\), it suffices to append zeros to \(u\) to obtain \(u'\in \mathcal{T}_{n}\) and apply the same argument.

\vspace*{\baselineskip}
\noindent \textit{New constraints on \(\mathcal{Q}_{n}\)}. To preclude such absurd tuples, we prove new constraints on recovery probabilities. Consider a family \(\mathcal{F}\) of subsets of \([n]\). Its disjointness graph, \(G_{\mathcal{F}}\), is the graph whose vertices are the elements of the family and whose edges are between disjoint elements. As an example, \cref{fig: K52} depicts the disjointness graph of the family of \(2\)-sets in \(\{1,2,3,4,5\}\). The complement of a disjointness graph has edges between intersecting elements and is called an intersection graph. An orthonormal representation of a graph is a map that assigns each vertex to a unit vector in a Hilbert space such that for every two non-adjacent vertices, the corresponding unit vectors are orthogonal~\cite{Lovasz1979}.  Given \(q \in \mathcal{Q}_n\), we can obtain from no-cloning an orthonormal representation of the intersection graph of \(\mathcal{F}\). Fix a unit vector \(\ket{\psi}\) in a code \(\mathcal{C}\) realizing \(q\). For each \(S \in \mathcal{F}\) such that \(q_S > 0\), define the unit vector \(\ket{v_S} : = R_S \ket{\psi} / || R_S \ket{\psi} ||\). Because \(0 \leq R_S \leq \mathds{1}\), we have 
\begin{align}
\label{eq: estimate for R and v}
\braket{\psi | R_{S}^2 | \psi} &= \frac{\braket{\psi | R_{S}^2 | \psi}^2}{\braket{\psi | R_{S}^2 | \psi}} \notag \\
&\leq \frac{\braket{\psi | R_{S} | \psi}^2}{\braket{\psi | R_{S}^2 | \psi}} = \braket{\psi|v_S} \!\!  \braket{v_S | \psi}.
\end{align}
For \(S \in \mathcal{F}\) with \(q_S = 0\), we assign mutually orthogonal unit vectors \(\ket{v_S}\)  that are orthogonal to the Hilbert space of the carriers. Since \(\braket{v_S |v_{S'}} = 0\) for each disjoint \(S\) and \(S'\) in \(\mathcal{F}\), these unit vectors form an orthonormal representation of the intersection graph of \({\mathcal{F}}\).  We use this construction to estimate the sum of recovery probabilities corresponding to elements in \(\mathcal{F}\): 
\begin{align}
\sum_{S \in \mathcal{F}} q_S &= \sum_{S \in \mathcal{F}} \braket{\psi | R_{S}^2 | \psi} \notag\\
&\leq \sum_{S \in \mathcal{F}} \braket{\psi|v_S} \!\!  \braket{v_S | \psi} \notag\\
&\leq\max_{\ket{\psi'}, \ket{v'_S}} \sum_{S \in \mathcal{F}} |\braket{\psi' | v'_S}|^2, \label{eq: formula for Lovasz num}
\end{align}
where the maximization is over all orthonormal representations \(\{ \ket{v_S'} \}_{S \in \mathcal{F}}\) of the intersection graph of \(\mathcal{F}\) and all unit vectors \(\ket{\psi'}\). The result of this maximization, \(\vartheta(G_{\mathcal{F}})\), is the Lovász number of the disjointness graph \(G_{\mathcal{F}}\)~\cite[Thm.~5]{Lovasz1979}. Originally, the Lovász number was shown to equal the maximum value of the objective in \cref{eq: formula for Lovasz num} over all real orthonormal representations. A proof that maximizing the objective over all complex orthonormal representations yields the same number is in Ref.~\cite{Marrero2015}. 
Thus, it holds that
\begin{align}
\label{eq: estimate for recovery prob}
  \sum_{S \in \mathcal{F}} q_S \leq \vartheta (G_{\mathcal{F}}).
\end{align}
If \(\mathcal{F}\) is a family of disjoint sets, then \(\vartheta(G_{\mathcal{F}}) =1\). Thus, the constraint \cref{ee: partition ineq} is a special case of \cref{eq: estimate for recovery prob}.

\begin{figure}[t!]
\centering
    \begin{tikzpicture}[
    vertex/.style={circle, draw=black, thick, fill=blue!10, minimum size=7mm, inner sep=0pt, font=\small},
    edge/.style={thick, black}, scale = 0.7
]

\foreach \i/\a/\lblout/\lblin in {
    0/90/{1,2}/{3,5},
    1/162/{3,4}/{2,5},
    2/234/{1,5}/{2,4},
    3/306/{2,3}/{1,4},
    4/18/{4,5}/{1,3}%
} {
    \node[vertex] (v\i) at (\a:3) {$\lblout$};
    \node[vertex] (u\i) at (\a:1.5) {$\lblin$};
}

\draw[edge] (v0) -- (v1) -- (v2) -- (v3) -- (v4) -- (v0);

\draw[edge] (u0) -- (u2) -- (u4) -- (u1) -- (u3) -- (u0);

\foreach \i in {0,1,2,3,4} {
    \draw[edge] (v\i) -- (u\i);
}
\end{tikzpicture}
\caption{The disjointness graph of the family of \(2\)-sets in \(\{1,2,3,4,5\}\) is known as the Petersen graph. Its Lovász number is \(4\) which equals the size of its largest independent set.}
\label{fig: K52}
\end{figure}

Let \(K_{n,k}\) denote the disjointness graph of the family of \(k\)-sets in \([n]\). For \(k \leq n/2\), the Lovász number of \(K_{n,k}\) is known~\cite[Thm.~13]{Lovasz1979}:
\begin{align}
\label{eq: kneser theta}
\vartheta(K_{n,k}) = \binom{n-1}{k-1}. 
\end{align}
Below and in the Supplemental Material, we use this along with \cref{eq: estimate for recovery prob} to prove the erasure-simulation conjecture.

\vspace*{\baselineskip}
\noindent\textit{Proof of the erasure-simulation conjecture}. Fix a real number \(t\) that satisfies \(1/2 < t < 1\). We wish to show that if \(\mathcal{E}_{t}\) can simulate \(\mathcal{E}_{r}\), then \(r\geq t\). The \(t=1/2\) and \(t=1\) cases are addressed in Ref.~\cite{hastings2014notes}. Since increasing the carrier dimensions can only make simulation easier, we allow the carrier dimensions to be greater than \(K\).

We first argue that we can restrict to isometric encodings. For this, consider a simulation  \(\mathcal{D}_{n}\mathcal{E}_{t}^{\otimes n}\mathcal{A}_{n}\) of \(\mathcal{E}_{r}\). If \(\mathcal{A}_{n}\) is not isometric, then we can dilate \(\mathcal{A}_{n}\) to an isometric channel and embed the finite-dimensional environment of the dilation into a finite number, \(\ell\), of auxiliary erasable carrier systems. Denote this isometric channel by \(\mathcal{U}_{n+\ell}\). We compose the decoding channel \(\mathcal{D}_n\) with a partial trace over the auxiliary carrier systems to arrive at a decoding channel \(\mathcal{D}'_{n+\ell}\). Then, for each \(n\) it holds that  \(\mathcal{D}_{n}\mathcal{E}_{t}^{\otimes n}\mathcal{A}_{n} = \mathcal{D}'_{n+\ell} \mathcal{E}_t^{\otimes (n+\ell)} \mathcal{U}_{n+\ell}.\)

We address here the special case of exact simulation, which captures the heart of the matter. A complete proof of the erasure-simulation conjecture is in the Supplemental Material. Let us suppose that there exist quantum channels \(\mathcal{U}_n\) and \(\mathcal{D}_n\), where the \(\mathcal{U}_n\)'s are isometric, such that 
\begin{align}
\label{eq: zero error equality}
\mathcal{E}_{r_n} = \mathcal{D}_n \mathcal{E}_t^{\otimes n} \mathcal{U}_n \quad \text{and} \quad r_n \rightarrow r \in [0,1], 
\end{align}
as \(n\rightarrow\infty\). Let \(\dyad{e}\) be the erasure flag projector and define \(P := \mathds{1} - \dyad{e}\). Expanding the tensor product on the right-hand side yields
\begin{align}
\label{eq: zero erro euqality exp}
\mathcal{E}_{r_n}  = \sum_{S \subseteq [n]} (1-t)^{|S|} t^{n-|S|} \mathcal{D}_{n}\; \dyad{e}^{\otimes S^c} \otimes \tr_{S^c} \mathcal{U}_n.
\end{align}
By composing both sides of \cref{eq: zero erro euqality exp} with the action \(X \mapsto P X P \), we obtain that \(1-r_{n}\) times the identity channel equals a convex combination of quantum operations. As the identity channel is an extreme point of the set of completely positive maps, each of these quantum operations is proportional to the identity channel. For \(S \subseteq [n]\), the proportionality constant, \(q_S\),  is the recovery probability conditional on erasure of the carriers in \(S^c\). Immediately, we have \(1 - r_n = \sum_{S \subseteq [n]} (1-t)^{|S|} t^{n - |S|} q_S.\)

Let \(\delta > 0\) denote a number small enough so that \(1-t + \delta < 1/2\). We break the sum \(\sum_{S \subseteq [n]} (1-t)^{|S|} t^{n - |S|} q_S\) into a typical part accounting for the sets \(S \subseteq [n]\) such that \(n (1-t) - n\delta \leq |S| \leq n(1-t) + n \delta\) and an atypical part accounting for the rest that we denote by \(\omega (n)\). Now we estimate using \cref{eq: estimate for recovery prob} and \cref{eq: kneser theta}:
\begin{align}
\label{eq: estimates for zero error case}
    1-r_n &= \sum_{\text{typical } k} (1-t)^{k} t^{n-k}\sum_{S \subseteq [n], |S| = k} q_S + \omega (n) \notag\\
    &\leq \sum_{\text{typical } k} (1-t)^{k} t^{n-k} \binom{n-1}{k-1} + \omega (n) \notag\\
    &\leq 1-t + \omega(n).
\end{align}
By the law of large numbers, \(\omega(n) \rightarrow 0\) and so \(1-r \leq 1-t\). This completes the proof of the special case.

In the result and proof above, we required \(n\rightarrow\infty\). We point out that the result implies in particular that if for some \(n\), some encoding channel  \(\mathcal{A}'_n\) and some decoding channel \(\mathcal{D}'_n\), it holds that 
\begin{align}
\label{eq: channel equality end}
\mathcal{E}_r = \mathcal{D}'_n \mathcal{E}_t^{\otimes n} \mathcal{A}'_n, 
\end{align}
then \(1-r \leq 1-t\).  This is because we can append arbitrarily many auxiliary carriers initialized in some fixed state, encode into the first \(n\) carriers using \(\mathcal{A}'_n\) and decode by using \(\mathcal{D}'_n\) on the first \(n\) carriers and a partial trace on the auxiliary carriers.

\vspace*{\baselineskip}
\noindent\textit{Concluding remarks}. For five carriers or more, a complete understanding of realizable recovery probabilities is lacking. We highlight a curious inequality that holds for points in \(\mathcal{Q}_5\). The disjointness graph of the family of sets \(\{\{1,2\}, \{2,3\}, \{3,4\}, \{4,5\}, \{5,1\}\}\) is a pentagon and the Lovász number of a pentagon is \(\sqrt{5}\). So, for every point \(q \in \mathcal{Q}_5\), we have
\begin{align}
q_{\{1,2\}} + q_{\{2,3\}} + q_{\{3,4\}} + q_{\{4,5\}} + q_{\{5,1\}}
\leq \sqrt{5}. 
\end{align}
Is there a quantum code that saturates this inequality? Another question we leave for future work is whether there exist constraints on realizable recovery probabilities beyond no-cloning.

\vspace*{\baselineskip}
\noindent\textit{Data availability}. The Sage script we used to compute the extreme points of \(\mathcal{T}_n\) for \(n \leq 4\) is available at~\cite{repo2026}.
\begin{acknowledgments}
\label{ack}
While this paper was in the final editing stages, we learned that Rabsan Galib Ahmed and Graeme Smith independently proved the erasure-simulation conjecture in Ref.~\cite{ahmed2026distributed}. The authors thank Cole Maurer for helpful discussions. MA, NL, and AS acknowledge support from U.S. National Science Foundation Grant PHY-2116246. This work includes contributions of the National Institute of Standards and Technology (NIST), which are not subject to  U.S. copyright. The U.S. Government is authorized to reproduce and  distribute reprints for governmental purposes notwithstanding any copyright annotation thereon. The use of trade names and software does not imply endorsement by the US government, nor does it imply these are necessarily the best available for the purpose used here. This document has not been peer reviewed but has been cleared by NIST for release. The authors acknowledge use of Anthropic's Claude (Fable 5), a generative artificial intelligence tool, to research and interpret the literature of Lovász numbers in order to bound from above the optimal value of the program in Eq.~\ref{eq:e sdp for theta}. Proof strategies produced through the use of Claude were refined by the authors and then used to prove Prop.~\ref{prop: program upper bound} in the Supplemental Material. 
\end{acknowledgments}
\vspace*{\baselineskip}
\vspace*{\baselineskip}

\bibliography{references}

\appendix

\section{Quantum operation lemma}
\label{sec: lemma}

\begin{lemma}
\label{lem: aux qubit for instrument}
Let \(\mathcal{R}\) denote a quantum operation on a finite-dimensional quantum system. There exists a positive semidefinite operator \(R\) and a quantum channel \(\mathcal{J}\) such that \(\mathcal{R} (\cdot ) = \mathcal{J} ( R (\cdot) R)\). 
\end{lemma}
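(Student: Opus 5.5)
Let \(\mathcal{R}\) map operators on a finite-dimensional space \(H\) to operators on \(H'\). The plan is to take \(R := \sqrt{\mathcal{R}^\dagger(\one)}\), where \(\mathcal{R}^\dagger\) is the Heisenberg-picture adjoint, and to build \(\mathcal{J}\) explicitly from a Kraus decomposition. Write \(\mathcal{R}(\cdot) = \sum_k A_k (\cdot) A_k^\dagger\) with \(A_k : H \to H'\). Since \(\mathcal{R}\) is trace non-increasing, the effect \(E := \sum_k A_k^\dagger A_k\) satisfies \(0 \leq E \leq \one\). Setting \(R := E^{1/2}\) gives a positive semidefinite \(R \leq \one\) with \(\tr(R\rho R) = \tr(E\rho) = \tr\mathcal{R}(\rho)\). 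This is consistent with the interpretation of \(\tr(R\rho R)\) as the success probability used in the main text.

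Next, factor each Kraus operator through \(R\). Let \(R^{+}\) denote the Moore--Penrose pseudoinverse of \(R\), and let \(\Pi\) be the projector onto \(\operatorname{supp} R = \operatorname{supp} E\). Because \(A_k^\dagger A_k \leq E\), we get \(\ker E \subseteq \ker A_k\), hence \(A_k = A_k \Pi = A_k R^{+} R\). Define \(B_k := A_k R^{+}\), so that \(\mathcal{R}(X) = \sum_k B_k (R X R) B_k^\dagger\). Then compute
\[
\sum_k B_k^\dagger B_k = R^{+} E R^{+} = \Pi .
\]
So the \(B_k\) form a trace-preserving family on \(\operatorname{supp} R\) only.

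To complete this to a channel on all of \(H\), fix any state \(\sigma\) on \(H'\) and add the map \((\one-\Pi)(\cdot)(\one-\Pi) \mapsto \tr[(\one-\Pi)(\cdot)]\,\sigma\), implemented by the Kraus operators \(\ket{s_j}\bra{f}\,(\one-\Pi)\) suitably weighted. Concretely, let \(\mathcal{J}(Y) := \sum_k B_k Y B_k^\dagger + \tr[(\one-\Pi)Y]\,\sigma\). This map is completely positive, and its Kraus operators sum to \(\Pi + (\one - \Pi) = \one\), so it is trace preserving. Finally, \(R X R\) is supported in \(\operatorname{supp} R\), so the added term vanishes on it, and \(\mathcal{J}(RXR) = \mathcal{R}(X)\).

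The only step requiring care is the kernel inclusion \(\ker R \subseteq \ker A_k\), which justifies \(A_k = A_k R^{+} R\) when \(R\) is singular. It follows from \(\bra{v}A_k^\dagger A_k\ket{v} \leq \bra{v}E\ket{v} = 0\) for \(v \in \ker E\). An alternative, which avoids the pseudoinverse, is the polar decomposition of the Stinespring operator \(V := \sum_k A_k \otimes \ket{k}\): write \(V = W R\) with \(W\) a partial isometry, then extend \(W\) to an isometry. I will use the explicit pseudoinverse route, since it is concrete.
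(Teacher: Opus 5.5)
Your proof is correct and follows the paper's own argument. The paper also sets \(R = (\sum_i R_i^\dagger R_i)^{1/2}\) and defines \(J_i = R_i R^{-1}\) using the Moore--Penrose inverse, notes that \(\sum_i J_i^\dagger J_i = P_{R>0}\), and then adds a completing term supported on \(\one - P_{R>0}\). Your explicit check that \(\ker R \subseteq \ker A_k\) (so that \(A_k R^{+} R = A_k\)) and your completion via \(\tr[(\one-\Pi)Y]\,\sigma\) tidy up details the paper leaves implicit. In particular, your completion lands in the output space even when it differs from the input space, which the paper's extra Kraus operator \((\one-P_{R>0})^{1/2}\) does not.
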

\begin{proof}
Let  \(\{R_1, \ldots, R_m\}\) be a Kraus representation for \(\mathcal{R}\). Define \(R\) to be the positive semidefinite operator \((\sum_{i=1}^m R_i^{\dagger} R_i)^{1/2}\). For each \(i \in [m]\), let \(J_i\) denote the product \(R_i R^{-1}\). If \(R\) is not invertible, then \(R^{-1}\) is the Moore-Penrose inverse of \(R\). Then
\begin{align}
\label{eq: q inst into measurement and channel}
\sum_{i=1}^m J_i R (\cdot) R J_i^{\dagger} = \sum_{i=1}^m R_i (\cdot) R_i^\dagger = \mathcal{R}(\cdot). 
\end{align}
Furthermore, letting \(P_{R >0}\) denote the projection onto the support of \(R\), we have
\begin{align}
\label{eq: trace presrving cond for J}
\sum_{i=1}^m J_i^{\dagger} J_i = \sum_{i=1}^m R^{-1} R_i^{\dagger} R_i R^{-1} = P_{R > 0}. 
\end{align}
If \(R\) is not invertible, we can include an extra Kraus operator \(J_{m+1} : = ( \mathds{1} - P_{R > 0})^{1/2}\) so that \(\mathcal{J} (\cdot) := \sum_{i=1}^{m+1} J_i (\cdot) J_i^\dagger\) is a quantum channel. This extra operator annihilates the image of \(R\) so it does not affect \cref{eq: q inst into measurement and channel}.  \end{proof}

This lemma gives an operational recipe to implement \(\mathcal{R}\). We evolve the system and an auxiliary qubit initialized in \(\ket{0}_{\text{aux}}\) by a unitary that maps \(\ket{\alpha} \otimes \ket{0}_{\text{aux}}\)  to 
\begin{align}
\label{eq: uni with aux}
R \ket{\alpha} \otimes \ket{0}_{\text{aux}} + (\mathds{1} - R^2)^{1/2} \ket{\alpha} \otimes \ket{1}_{\text{aux}},
\end{align}
where \(\ket{\alpha}\) denotes an arbitrary unit vector in the Hilbert space of the system. Then we measure the auxiliary qubit to find out whether \(R\) occurred or not. If so, then we implement the quantum channel \(\mathcal{J}\).

If \(\mathcal{R}\) is supposed to perfectly recover quantum information after some noise channel \(\mathcal{N}\) with Kraus representation \(\{N_j\}_{j=1}^\ell\), then the positive semidefinite operator \(R\) appears in the quantum error correction conditions. Let \(\mathcal{C}\) denote a subspace code such that when restricted to this code, \(\mathcal{R} \mathcal{N}\) equals \(p\) times the identity channel for some \(0<p\leq 1\). Then for every unit vector \(\ket{\alpha} \in \mathcal{C}\), the quantum error correction conditions 
\begin{align}
\label{eq: cond quantum error correction}
\bra{\alpha} N_{i}^\dagger R^2 N_{j} \ket{\alpha} &= T_{i,j} \; \forall i, j \in [\ell]
\end{align}
must hold. The probability of recovery \(p\) equals \(\sum_{i=1}^\ell T_{ii}\). The problem of finding quantum recovery operations with the highest possible recovery probability reduces to a semidefinite programming problem~\cite{Kuku23, erasure23}.

\clearpage
\newpage
\onecolumngrid

\documentclass[Main.tex]{subfiles}
\ifSubfilesClassLoaded{
  \allowdisplaybreaks
}{}

\begin{document}

\ifSubfilesClassLoaded{
  \title{Supplemental Material: Proof of the erasure-simulation conjecture}
  \author{Mohammad A. Alhejji}
  \affiliation{Center for Quantum Information and Control, University of New Mexico, Albuquerque, NM 87131, USA}
  \affiliation{Department of Physics and Astronomy, University of New Mexico, Albuquerque, NM 87131, USA}

  \author{Noah Lordi}
  \affiliation{Center for Quantum Information and Control, University of New Mexico, Albuquerque, NM 87131, USA}
  \affiliation{Department of Physics and Astronomy, University of New Mexico, Albuquerque, NM 87131, USA}

  \author{Omkar Baraskar}
  \affiliation{Cheriton School of Computer Science, University of Waterloo, Waterloo, ON N2L 3G1, Canada}

  \author{Ariel Shlosberg}
  \affiliation{Center for Quantum Information and Control, University of New Mexico, Albuquerque, NM 87131, USA}
  \affiliation{Department of Physics and Astronomy, University of New Mexico, Albuquerque, NM 87131, USA}

  \author{Emanuel Knill}
  \affiliation{Applied and Computational Mathematics Division, National Institute of Standards and Technology, Boulder, CO, 80305, USA}
  \affiliation{Department of Physics, University of Colorado, Boulder, CO, 80309, USA}
  \affiliation{Center for Theory of Quantum Matter, University of Colorado, Boulder, CO, 80305, USA}
  \date{\today}

  \maketitle
  
}{
  \stepcounter{section}%
  \setcounter{equation}{0}%
  \section*{Supplemental Material: Proof of the erasure-simulation conjecture}
  \label{app:em}
  \let\section\subsection
}

\ifSubfilesClassLoaded{%
  \onecolumngrid
  \appendix
}{}

For \(s \in [0,1]\), the erasure channel \(\mathcal{E}_{s}\) maps operators on a
finite-dimensional Hilbert space to operators on the extension of this space
by an orthogonal erasure flag state \(\ket{e}\) and is defined by
\begin{align}
\mathcal{E}_{s}(X) := (1-s) X +s \tr(X)\dyad{e}.
\end{align}
Let \(t\) and \(r\) denote two real numbers that satisfy \(1/2 < t < 1\) and \(0 \leq r < 1\).  Suppose that there exist isometric encoding channels \(\mathcal{U}_n\) and decoding channels \(\mathcal{D}_n\) such that \(\mathcal{C}_{n}:=\mathcal{D}_n \mathcal{E}_t^{\otimes n} \mathcal{U}_n\) converges to \(\mathcal{E}_r\) in diamond norm as \(n\rightarrow\infty\). The input space of \(\mathcal{E}_r\) is \(K\)-dimensional for \(K > 1\). The input space of \(\mathcal{E}_t\) is arbitrary but assumed to be finite-dimensional for simplicity. Our goal is to prove that \(1-t \geq 1-r\). The assumption that the encoding channels are isometric is without loss of generality, as we noted in the main text.

We start by expanding the tensor product in \(\mathcal{D}_n \mathcal{E}_t^{\otimes n} \mathcal{U}_n\) to obtain
\begin{align}
\label{eq:e conv comb of channels 1}
\mathcal{D}_n \mathcal{E}_t^{\otimes n} \mathcal{U}_n = \sum_{S \subseteq [n]} t^{n-|S|} (1-t)^{|S|} \mathcal{D}_{n}\;\dyad{e}^{\otimes S^c} \otimes \tr_{S^c} \mathcal{U}_n.
\end{align}
We may, if necessary, follow \(\mathcal{D}_n \left(\mathcal{E}_t^{\otimes n} \mathcal{U}_n\right)\) with a pinching channel that kills any coherence between the input space and the erasure flag space without increasing its diamond distance from \(\mathcal{E}_r\). For each carrier set \(S \subseteq [n]\), let \(\mathcal{D}_{S}\) denote the conditional decoding channel on the carriers in \(S\) defined by the action \(X_S \mapsto \mathcal{D}_n ( \dyad{e}^{\otimes S^c} \otimes X_S)\). We rewrite the convex combination of channels in Eq.~\eqref{eq:e conv comb of channels 1}:
\begin{align}
\label{eq:e conv comb of channels 2}
\mathcal{D}_n \mathcal{E}_t^{\otimes n} \mathcal{U}_n = \sum_{S \subseteq [n]} t^{n-|S|} (1-t)^{|S|}  \mathcal{D}_{S} \tr_{S^c} \mathcal{U}_n.
\end{align}
Because the erasure flag space and the input space are decohered, we may take each conditional
decoding channel \(\mathcal{D}_{S}\) to decompose orthogonally according to erasure
as \(\mathcal{M}_S \oplus \mathcal{M}_{S}^{e}\), where \(\mathcal{M}_S\) and \(\mathcal{M}_S^{e}\) are the quantum operations
defined by
\begin{align}
  \mathcal{M}_{S}^{e}(X_{S})
  :=\bra{e}\mathcal{D}_{S}(X_{S})\ket{e}\dyad{e} \quad \text{and} \quad 
  \mathcal{M}_{S}(X_{S})
  := \mathcal{D}_{S}(X_{S})-\mathcal{M}_{S}^{e}(X_{S}).
\end{align}
These are quantum operations on the carriers in \(S\). We denote the
composite channel \(\mathcal{D}_{S} \tr_{S^c} \mathcal{U}_n\) by
\(\mathcal{N}_S\). When applying maps defined on a subsystem to a state of a larger system, we implicitly tensor them with an identity map to extend the operations to the larger system.

We adjoin a \(K\)-dimensional reference space to the input space of
\(\mathcal{E}_r\) and prepare a maximally entangled state
\(\Phi = \dyad{\Phi}\) between the input space and this reference. We
introduce \(\Phi_{n}= \dyad{\Phi_n}\) where
\(\Phi_{n}=\mathcal{U}_n (\Phi)\), which is the pure encoded version
of \(\Phi\).  Here, \(\mathcal{U}_{n}\) is applied to the input-space
factor.  Let \(P\) denote the projector onto the orthogonal complement
of the erasure flag \(\ket{e}\) in the output space of
\(\mathcal{E}_r\).  For a carrier set \(S\), define
\(q_S := \tr (P \mathcal{N}_S(\Phi)) = \tr (\mathcal{M}_S(\Phi_n))\).
The overall probability of not erasing when given input \(\Phi\) is given by
\begin{align}
\label{eq:e conv for probabilities}
1-r_n &:= \sum_{S \subseteq [n]} t^{n-|S|} (1-t)^{|S|}  q_S.
\end{align}
We refer to \(1-r_{n}\) as a recovery probability.  Diamond-norm
convergence of \(\mathcal{C}_{n}\) to \(\mathcal{E}_{r}\) implies that
\(1-r_{n}\) converges to \(1-r\). 

For a carrier set \(S\) such that \(q_{S}>0\), define the
(entanglement) recovery fidelity as
\begin{align}
\label{eq:e recovery fidelity}
f_{S} := \frac{1}{q_S} \tr( \Phi \mathcal{N}_S(\Phi)) = \frac{1}{q_S} \tr( \Phi \mathcal{M}_S(\Phi_n)).
\end{align}
Define the (entanglement) recovery error as \(\varepsilon_{S}:=(1-f_{S})\).  For \(q_{S}=0\), define \(f_{S}:=0\) and
\(\varepsilon_{S}:=1\). See Fig.~\ref{fig:1} for a circuit depiction of the erasure simulation. 

\begin{figure}[t!]
    \centering
    \includegraphics[width=0.80\textwidth]{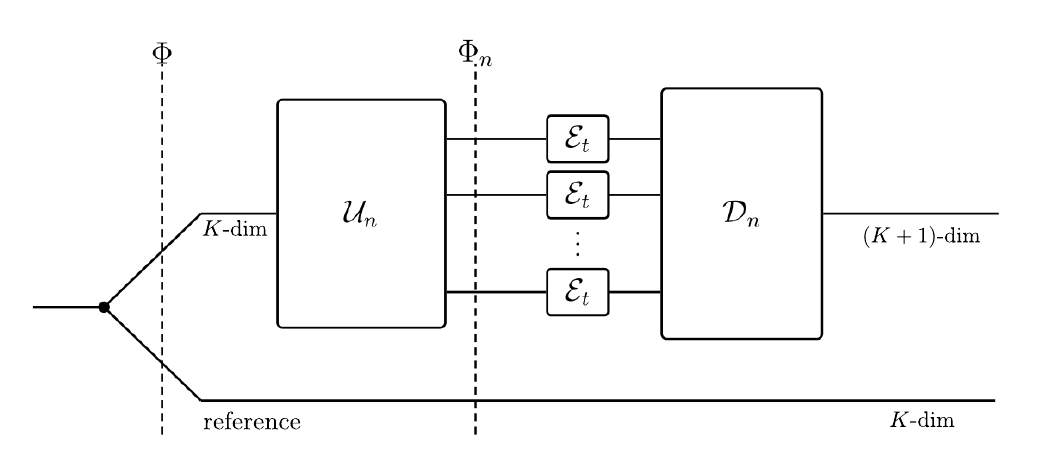}
    \caption{A circuit depiction of the quantum channel \(\mathcal{D}_n \mathcal{E}_t^{\otimes n} \mathcal{U}_n\) applied to one half of the maximally entangled state \(\Phi\).} 
    \label{fig:1}
\end{figure}

We can separate the performance of the erasure simulation by the size of the
subsets \(S\). Let \(\binom{[n]}{k}\) denote the family of \(k\)-sets in \([n]\). Define the conditional-on-\(k\) recovery probability \(q_k\) as the sum \(\sum_{S \in \binom{[n]}{k}} q_S\). For \(k\) such that \(q_k \neq 0\), define the recovery fidelity and recovery error
according to
\begin{align}
\quad f_{k} :=\frac{1}{q_{k}}\sum_{S\in\binom{[n]}{k}}q_{S}f_{S} \quad \text{and} \quad   \varepsilon_{k} :=\frac{1}{q_{k}}\sum_{S\in\binom{[n]}{k}}q_{S}\varepsilon_{S}.
\end{align}
For \(q_{k}=0\), we set \(f_{k} :=0\) and \(\varepsilon_{k} := 1\). We use \(o_n(1)\) as shorthand for a quantity which tends to \(0\) when \(n\) tends to \(+\infty\), all else held constant. Diamond-norm convergence implies that \(\bra{\Phi}\mathcal{D}_{n}\mathcal{E}_{t}^{\otimes n}\mathcal{U}_{n}(\Phi)\ket{\Phi} = (1-r)+o_{n}(1)\). Therefore,
\begin{align}
  \sum_{k=0}^n t^{n-k}(1-t)^{k}q_{k}f_{k}
  &= \sum_{S \subseteq [n]}t^{n-|S|}(1-t)^{|S|}q_{S}f_{S}\notag\\
  &=\bra{\Phi}\mathcal{D}_{n}\mathcal{E}_{t}^{\otimes
    n}\mathcal{U}_{n}(\Phi)\ket{\Phi} = (1-r)+o_{n}(1),
\end{align}
and
\begin{align}
  \sum_{k=0}^n t^{n-k}(1-t)^{k}q_{k}\varepsilon_{k}
  &=  \sum_{k=0}^n t^{n-k}(1-t)^{k}q_{k}(1-f_{k})\notag\\
  &= \sum_{k=0}^n t^{n-k}(1-t)^{k}q_{k} -   \sum_{k=0}^n t^{n-k}(1-t)^{k}q_{k}f_{k}
    \notag\\
  &= (1-r_{n})-(1-r) +o_{n}(1) = o_{n}(1).
    \label{eq:e qeps limit}
\end{align}

By the lemma in the appendix of the main text, for each \(S \subseteq [n]\), there
exists a positive semidefinite operator \( R_S\) satisfying \(R_S \leq \one\) and a
quantum channel \(\mathcal{J}_{S}\) acting on the carriers in \(S\) such that
for all operators \(X_{S}\) on the carriers in \(S\),
\begin{align}
\label{eq:e recovery operator}
  \mathcal{M}_S (X_S) = P ( \mathcal{D}_{S} (X_S)) P = \mathcal{J}_{S} ( R_S (X_S) R_S).
\end{align}
We can express \(q_S\) as an expectation of \(R_S^2\):
\begin{align}
\label{eq:e recovery probability for S}
q_S = \tr(\mathcal{M}_S (\Phi_n)) &= \tr( P \mathcal{D}_{S} (\Phi_n) P) \notag \\
&= \tr( \mathcal{J}_{S} (R_S \, \Phi_n \, R_S) ) \notag \\
&= \bra{\Phi_n} R_S^2 \ket{\Phi_n}\notag \\
&=  ||R_{S} \ket{\Phi_n}||^2.
\end{align}

Let us consider what happens when the decoding channels corresponding
to two disjoint sets \(S_1, S_2 \subseteq [n]\) are applied in
parallel.  The parallel application results in the quantum channel
\begin{align}
\label{eq:e B12 def}
\mathcal{B}_{1 2} := \tr_{(S_1 \cup S_2)^c} \mathcal{D}_{S_1} \mathcal{D}_{S_2} \mathcal{U}_n.
\end{align}
We label the output systems of \(\mathcal{D}_{S_{b}}\) by \(b\) for
\(b=1,2\).  Then
\begin{align}
\label{eq:e marginal channels}
\tr_2 \mathcal{B}_{12} = \mathcal{D}_{S_1} \tr_{S_1^c} \mathcal{U}_n = \mathcal{N}_{S_1} \quad \text{and} \quad \tr_1 \mathcal{B}_{12}  = \mathcal{D}_{S_2} \tr_{S_2^c} \mathcal{U}_n = \mathcal{N}_{S_2}.
\end{align}
We label the reference system by \(3\) and, for readability, place it between the two output systems of \(\mathcal{B}_{12}\). The probability of simultaneous recovery is
\begin{align}
\label{eq:e p12 def}
p_{12} := \tr( P_1 \otimes \one_3 \otimes P_2 \; \mathcal{B}_{1 2} (\Phi)) = \tr(P_1 \otimes \one_3 \otimes P_2 \; \mathcal{D}_{S_1} \mathcal{D}_{S_2} (\Phi_n)),
\end{align}
where for \(b=1,2\), \(P_{b}\) is the projector onto the space
orthogonal to the erasure flag space of the output of
\(\mathcal{D}_{S_{b}}\).  Let the operator \(E_{3}^{(1)}\)
denote the marginal of \(\mathcal{M}_{S_1}^{e} (\Phi_n)\) on the reference system
\(3\), and similarly for \(E_{3}^{(2)}\).  These operators satisfy
\(\tr (E_3^{(b)}) = 1-q_{S_b}\).  Consider the tripartite state
\(\Gamma_{132} := \mathcal{B}_{12} (\Phi)\). This state has 
marginals
\begin{align}
\Gamma_{13} = q_{S_1} \Delta_{13} + \dyad{e}_1 \otimes E_3^{(1)} \quad \text{and} \quad 
\Gamma_{32} = q_{S_2} \Delta_{32} + E_3^{(2)} \otimes \dyad{e}_2,
\end{align}
where \(\Delta_{13}\) denotes \(\frac{1}{q_{S_1}}\tr_{S_{1}^{c}}\mathcal{M}_{S_1} (\Phi_n)\) and \(\Delta_{32}\) denotes \(\frac{1}{q_{S_2}} \tr_{S_{2}^{c}}\mathcal{M}_{S_2} (\Phi_n)\). See Fig.~\ref{fig:2} for a circuit depiction of how \(\Gamma\) arises. 

For the next step, we apply the following monogamy of entanglement lemma, which
may be considered to be a quantified form of no-cloning.

\begin{lemma}
\label{lem:e monogamy of entanglement lemma}
Let \(\rho_{ABC}\) be a tripartite quantum state on three tensor copies of a \(d\)-dimensional space. For every pair of maximally entangled states \(\Psi_{AB}\) and \(\Psi_{BC}\), it holds that
\begin{align}
\label{eq:e monogamy of entanglement}
\tr ( \rho_{ABC} \; \Psi_{AB} \otimes \one_C) + \tr(\rho_{ABC} \; \one_A \otimes \Psi_{BC}) \leq 1 + \frac{1}{d}.
\end{align}
\end{lemma}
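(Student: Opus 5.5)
The plan is to bound the left-hand side of \cref{eq:e monogamy of entanglement} by the largest eigenvalue of the operator \(\Pi_1+\Pi_2\), where \(\Pi_1:=\Psi_{AB}\otimes\one_C\) and \(\Pi_2:=\one_A\otimes\Psi_{BC}\). Since \(\rho_{ABC}\) is a state, the left-hand side equals \(\tr(\rho_{ABC}(\Pi_1+\Pi_2))\), which is at most \(\|\Pi_1+\Pi_2\|_\infty\). Both \(\Pi_1\) and \(\Pi_2\) are orthogonal projectors. It therefore suffices to show \(\|\Pi_1+\Pi_2\|_\infty\le 1+1/d\).

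\textbf{Step 1: reduce to the overlap \(\|\Pi_1\Pi_2\|\).} For two orthogonal projectors, \(\|\Pi_1+\Pi_2\|_\infty = 1+\|\Pi_1\Pi_2\|_\infty\). If I want to avoid citing Jordan's lemma, I will argue directly. Take a unit vector \(x\) and write \(a:=\|\Pi_1x\|\), \(b:=\|\Pi_2x\|\), and \(c:=\|\Pi_1\Pi_2\|\). Decomposing \(x=\Pi_1x+(\one-\Pi_1)x\) gives \(b\le c\,a+\sqrt{1-a^2}\). Then \(a^2+b^2\) is maximized over \(a\in[0,1]\) with value \(1+c\), via a \(2\times 2\) eigenvalue computation for the matrix \(\begin{pmatrix}1 & c\\ c & 1\end{pmatrix}\)-type form. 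This is routine.

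\textbf{Step 2: compute \(\Pi_1\Pi_2\Pi_1\).} This is the key calculation, and I expect it to be the only real obstacle. I claim \(\Pi_1\Pi_2\Pi_1=\frac{1}{d^2}\Pi_1\). Writing \(\Pi_1=\dyad{\Psi}_{AB}\otimes\one_C\), it suffices to compute the operator on \(C\) given by \(\bra{\Psi}_{AB}(\one_A\otimes\Psi_{BC})\ket{\Psi}_{AB}\). This operator equals \(\tr_{AB}[(\Psi_{AB}\otimes\one_C)(\one_A\otimes\Psi_{BC})]\). Taking the partial trace over \(A\) first turns \(\Psi_{AB}\) into its marginal \(\one_B/d\), since any maximally entangled state has a maximally mixed marginal. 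This leaves \(\frac1d\tr_B\Psi_{BC}=\frac{1}{d^2}\one_C\). Hence \(\Pi_1\Pi_2\Pi_1=\frac{1}{d^2}\Pi_1\).

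\textbf{Step 3: conclude.} From Step 2, \(\|\Pi_1\Pi_2\|_\infty^2=\|\Pi_1\Pi_2\Pi_1\|_\infty=1/d^2\), so \(\|\Pi_1\Pi_2\|_\infty=1/d\). Step 1 then gives \(\|\Pi_1+\Pi_2\|_\infty\le 1+1/d\), and the trace bound yields the inequality in \cref{eq:e monogamy of entanglement}.

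Nothing in the argument depends on which maximally entangled states \(\Psi_{AB}\) and \(\Psi_{BC}\) are chosen. The only property used is that their marginals are maximally mixed, which is what makes the lemma hold for every such pair.
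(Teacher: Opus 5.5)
Your architecture is the paper's. You bound the left-hand side by \(\|\Pi_1+\Pi_2\|\), use \(\|\Pi_1+\Pi_2\|=1+\|\Pi_1\Pi_2\|\) for nonzero projectors, and show \(\|\Pi_1\Pi_2\|=1/d\). Your Step 2 is correct and somewhat cleaner than the paper's version. The paper writes out \(\Pi_1\Pi_2=\frac1d\sum_k(|\Psi\rangle_{AB}\otimes|k\rangle_C)(\langle k|_A\otimes\langle\Psi|_{BC})\) and right-multiplies by a unitary built from a swap and local unitaries. Your identity \(\Pi_1\Pi_2\Pi_1=d^{-2}\Pi_1\) uses only the maximally mixed marginals and needs no basis.

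The flaw is in the self-contained argument you offer for Step 1 in place of Jordan's lemma. The inequality \(b\le ca+\sqrt{1-a^2}\) is true, but it is too weak to yield \(a^2+b^2\le 1+c\).

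\emph{Why it fails.} Substituting the bound gives the quadratic form of \(\begin{pmatrix}1+c^2&c\\ c&1\end{pmatrix}\), not of \(\begin{pmatrix}1&c\\ c&1\end{pmatrix}\). Its maximum is \(\max_{a\in[0,1]}\bigl(a^2+(ca+\sqrt{1-a^2})^2\bigr)=1+\tfrac{c^2}{2}+c\sqrt{1+\tfrac{c^2}{4}}\), which exceeds \(1+c\). Imposing \(b\le 1\) does not rescue it. For \(d=2\) (so \(c=\tfrac12\)), the pair \(a=\tfrac45\), \(b=1\) satisfies your constraint, since \(\tfrac12\cdot\tfrac45+\tfrac35=1\), and gives \(a^2+b^2=\tfrac{41}{25}>\tfrac32\). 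The triangle inequality discards the fact that the contribution of \((\one-\Pi_1)x\) to \(\Pi_2x\) must shrink when the overlap term is large.

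\emph{How to fix it.} Either cite the two-projection result, as the paper does by citing Vidav, or repair the argument directly:
\begin{itemize}
\item For \(a,b>0\), set \(y_1=\Pi_1x/a\) and \(y_2=\Pi_2x/b\). Then \(a=\langle y_1|x\rangle\) and \(b=\langle y_2|x\rangle\).
\item Hence \(a^2+b^2\le\bigl\|\,|y_1\rangle\langle y_1|+|y_2\rangle\langle y_2|\,\bigr\|=1+|\langle y_1|y_2\rangle|\), the top eigenvalue of the \(2\times2\) Gram matrix.
\item Since \(\langle y_1|y_2\rangle=\langle y_1|\Pi_1\Pi_2|y_2\rangle\), this is at most \(1+c\).
\item The cases \(a=0\) or \(b=0\) are trivial.
\end{itemize}
With that repair your proof is complete.
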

\begin{proof}
The left-hand side of Eq.~\eqref{eq:e monogamy of entanglement} is bounded from above by the largest eigenvalue of the projector sum \(\Psi_{AB} \otimes \one_C + \one_A \otimes \Psi_{BC}\), which equals \(1\) plus the operator norm of the product \((\Psi_{AB} \otimes \one_C) (\one_A \otimes \Psi_{BC})\)~\cite{Vidav1977-hb}. An explicit calculation shows that the product equals
\(\frac{1}{d} \sum_{k=1}^d (\ket{\Psi}_{AB} \otimes \ket{k}_C) (\bra{k}_A \otimes \bra{\Psi}_{BC})\). Observe that there is a unitary
  \(V\) which is a composition of the swap of the \(A\) and \(C\) factors and local unitaries such that \(\frac{1}{d} \sum_{k=1}^d (\ket{\Psi}_{AB} \otimes \ket{k}_C) (\bra{k}_A \otimes \bra{\Psi}_{BC}) V = \frac{1}{d} \Psi_{AB} \otimes \one_C\). Since the operator norm is unitarily invariant, the claim follows from the fact that the operator norm of
  \(\frac{1}{d} \Psi_{AB} \otimes \one_C\) is \(\frac{1}{d}\).
\end{proof}

\begin{figure}[t!]
    \centering
    \includegraphics[width=0.85\textwidth]{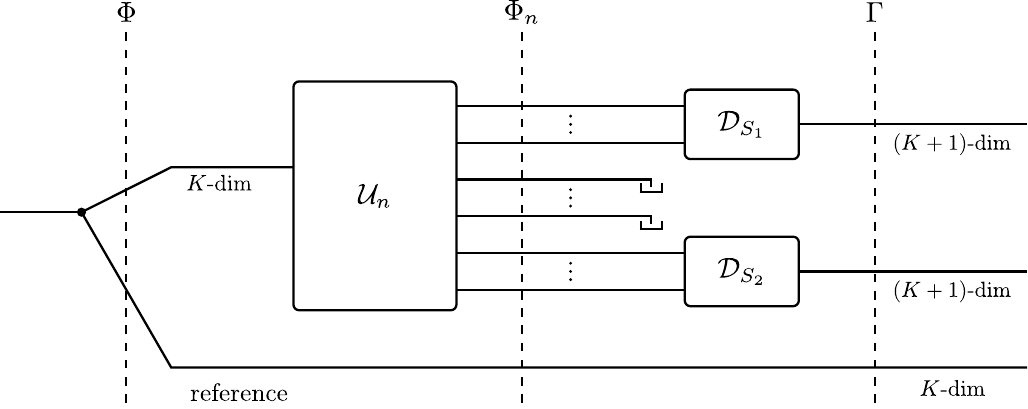}
    \caption{A circuit depiction of the quantum channel \(\mathcal{B}\) applied to one half of the maximally entangled state \(\Phi\) giving rise to the tripartite state \(\Gamma\).} 
    \label{fig:2}
\end{figure}

We apply the monogamy of entanglement inequality Eq.~\eqref{eq:e
  monogamy of entanglement} to upper bound the probability of
simultaneous recovery \(p_{12}\). Assume that \(p_{12}\) is nonzero,
as otherwise there is nothing to show. We can express \(p_{12}\) as
\begin{align}
  p_{12}= \tr ( ( \Phi_{13} + P_{1} \otimes \one_3 - \Phi_{13}) \otimes P_2\; \Gamma_{132}) ,
\end{align}
where \(\Phi_{13}\) is the reference-entangled state on systems
\(1,3\).  We apply the definitions of \(q_{S}\) and \(\varepsilon_{S}\)
to upper bound the right-hand side:
\begin{align}
  \tr ( ( \Phi_{13} + P_{1} \otimes \one_3 - \Phi_{13}) \otimes P_2 \;\Gamma_{132})
  &= \tr( \Phi_{13} \otimes P_2 \; \Gamma_{132}) + \tr( (P_{1} \otimes \one_3 - \Phi_{13}) \otimes P_2  \;\Gamma_{132}) \notag \\
  &\leq  \tr( \Phi_{13} \otimes P_2 \;\Gamma_{132}) + \tr( (P_{1} \otimes \one_3 - \Phi_{13}) \otimes \one_2 \;\Gamma_{132}) \notag \\
  &=\tr( \Phi_{13} \otimes P_2 \;\Gamma_{132})
    + \tr(\mathcal{M}_{S_1}(\Phi_n)) - \tr(\Phi \mathcal{M}_{S_{1}}(\Phi_{n}))\notag\\
  &= \tr( \Phi_{13} \otimes P_2 \;\Gamma_{132}) + q_{S_1}\varepsilon_{S_{1}}. \label{eq: p12 bound}
\end{align}
To obtain the second line, we used \(P_{1} \otimes \one_3 - \Phi_{13}\geq 0\)
and \(P_{2}\leq \one_{2}\).
By symmetry, we also have
\begin{align}
\label{eq: symmetry p12 bound}
  p_{12}\leq \tr( P_1 \otimes \Phi_{32} \; \Gamma_{132}) + q_{S_{2}}\varepsilon_{S_{2}}.
\end{align}
Define the normalized state
\[
  \tilde{\Gamma}_{132} := \frac{1}{p_{12}}(P_1 \otimes \one_3 \otimes P_2) \, \Gamma_{132} \, (P_1 \otimes \one_3 \otimes P_2).
\]
Then \(\tr( \Phi_{13} \otimes P_2 \; \Gamma_{132})=p_{12} \tr( \Phi_{13} \otimes P_2 \; \tilde{\Gamma}_{132})\) and \(\tr( P_1 \otimes \Phi_{32} \; \Gamma_{132})=p_{12} \tr( P_1 \otimes \Phi_{32} \; \tilde{\Gamma}_{132})\). By averaging the two
bounds \eqref{eq: p12 bound} and \eqref{eq: symmetry p12 bound}, we obtain
\begin{align}
  p_{12} &\leq \frac{p_{12}}{2} \left( \tr( \Phi_{13} \otimes P_2\; \tilde{\Gamma}_{132}) + \tr( P_1 \otimes \Phi_{32} \;\tilde{\Gamma}_{132}) \right) + (q_{S_{1}}\varepsilon_{S_{1}}+q_{S_{2}}\varepsilon_{S_{2}})/2.
\end{align}
By Lem.~\ref{lem:e monogamy of entanglement lemma}, this implies
\begin{align}
  p_{12} \leq \frac{p_{12}}{2} (1 + \frac{1}{K}) +
  (q_{S_{1}}\varepsilon_{S_{1}}+q_{S_{2}}\varepsilon_{S_{2}})/2.
\end{align}
Rearranging terms yields
\begin{align}
  p_{12} \leq \frac{q_{S_{1}}\varepsilon_{S_{1}}+q_{S_{2}}\varepsilon_{S_{2}}}{1 - {1}/{K}}.
\end{align}

In terms of the positive semidefinite operators \(R_{S_1}\) and \(R_{S_2}\) associated with disjoint \(S_1, S_2\), we have thus shown that
\begin{align}
  \label{eq: overlap bound for disjoint}
  p_{12} = \tr(P_1 \otimes \one_3 \otimes P_2 \; \mathcal{D}_{S_1} \mathcal{D}_{S_2} (\Phi_n)) = \big\| R_{S_1} R_{S_2} \ket{\Phi_n} \big\|^2 = \bra{\Phi_n} R_{S_1}^2 R_{S_2}^2 \ket{\Phi_n} \leq
  \frac{q_{S_{1}}\varepsilon_{S_{1}}+q_{S_{2}}\varepsilon_{S_{2}}}{1 - {1}/{K}}.
\end{align}

The bounds on \(p_{12}\) for disjoint \(S_{1}, S_{2}\) imply
constraints on the overall recovery probability. For this we need to
establish a relationship between recovery probabilities
\(q_{S}\) and bounds on the simultaneous recovery probabilities. To
obtain such a relationship we first extend the operators \(R_{S}^{2}\)
to projectors on a larger Hilbert space.  Let \(\Pi\) be the projector
onto the Hilbert space of the \(n\) carriers.  For each \(S\), use
\(0\leq R_{S}^{2}\leq\Pi\) to construct a projector \(Q_{S}\) on a
direct-sum extension of the carrier space by a Hilbert space
\(\mathcal{H}_{S}\) such that \(R_{S}^{2}=\Pi Q_{S}\Pi\), where the
Hilbert spaces \(\mathcal{H}_{S}\) are mutually orthogonal. The
construction ensures that
\(q_{S}=\bra{\Phi_{n}}R_{S}^{2}\ket{\Phi_{n}} =
\bra{\Phi_{n}}Q_{S}\ket{\Phi_{n}}\) and for distinct \(S_{1}, S_{2}\),
\(\bra{\Phi_{n}}R_{S_{1}}^{2} R_{S_{2}}^{2}\ket{\Phi_{n}} =
\bra{\Phi_{n}}Q_{S_{1}}Q_{S_{2}}\ket{\Phi_{n}}\).  For disjoint
\(S_{1}, S_{2}\) we get
\begin{align}
  \bra{\Phi_{n}}Q_{S_{1}}Q_{S_{2}}\ket{\Phi_{n}}
  &=\bra{\Phi_{n}}R_{S_{1}}^{2} R_{S_{2}}^{2}\ket{\Phi_{n}}\notag\\
  &\leq \frac{q_{S_{1}}\varepsilon_{S_{1}}+q_{S_{2}}\varepsilon_{S_{2}}}{1 - {1}/{K}}.
\end{align}
Of course, we also have the bound \(\bra{\Phi_{n}}Q_{S_{1}}Q_{S_{2}}\ket{\Phi_{n}}\leq 1\). 

We can apply the following property of Gram matrices to the Gram
matrix of the vectors \(Q_{S}\ket{\Phi_{n}}\).
\begin{lemma}
\label{lem:e block psd}
Let \(Q_1, \ldots, Q_p\) be projectors and let \(\ket{v}\) be a unit vector in their common domain. The Gram matrix \(M := ( \bra{v} Q_i Q_j \ket{v})_{i, j=1}^p\) and its diagonal \(m := ( \bra{v} Q_i \ket{v})_{i=1}^p\) have the property that
\begin{align}
\label{eq:e block psd ineq}
\begin{pmatrix}
1 & m^\dagger \\
m & M
\end{pmatrix} \geq 0.
\end{align}
Conversely, for every matrix \(\tilde{M}\) with diagonal \(\tilde{m}\) such that 
\begin{align}
\label{eq:e block matrix psd converse}
\begin{pmatrix}
1 & \tilde{m}^\dagger \\
\tilde{m} & \tilde{M}
\end{pmatrix} \geq 0,
\end{align}
there exist a unit vector \(\ket{\tilde{v}}\) and projectors \(\tilde{Q}_1,\ldots, \tilde{Q}_{p}\) such that \(\tilde{M} = (\bra{\tilde{v}} \tilde{Q}_i \tilde{Q}_j \ket{\tilde{v}})_{i, j=1}^p\). 
\end{lemma}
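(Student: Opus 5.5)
For the forward direction I would exhibit the block matrix as a Gram matrix. Take the vectors \(\ket{v}, Q_1\ket{v}, \ldots, Q_p\ket{v}\). Their Gram matrix has the following entries. The top-left entry is \(\braket{v|v}=1\). The off-diagonal first row and column are \(\bra{v}Q_i\ket{v} = m_i\), using \(Q_i^\dagger = Q_i\). The lower block is \(\bra{v}Q_iQ_j\ket{v} = M_{ij}\). A Gram matrix is positive semidefinite, which gives \cref{eq:e block psd ineq}. It is also worth recording that \(Q_i^2=Q_i\) forces \(M_{ii} = \bra{v}Q_i\ket{v} = m_i\), so the diagonal of \(M\) is exactly \(m\). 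This is the feature the converse must reproduce.

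For the converse, I would factor the positive semidefinite matrix in \cref{eq:e block matrix psd converse} as a Gram matrix. This yields vectors \(\ket{\tilde v}, \ket{w_1},\ldots,\ket{w_p}\) in some finite-dimensional Hilbert space \(\mathcal{H}\) with the following properties:
\begin{itemize}
\item \(\braket{\tilde v|\tilde v}=1\), so \(\ket{\tilde v}\) is a unit vector;
\item \(\braket{\tilde v|w_i} = \tilde m_i^{*}\) (or \(\tilde m_i\), depending on the convention for which slot is conjugated);
\item \(\braket{w_i|w_j}=\tilde M_{ij}\).
\end{itemize}
The diagonal condition reads \(\braket{w_i|w_i} = \tilde M_{ii} = \tilde m_i\). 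Combined with the off-diagonal entry, it gives \(\braket{\tilde v|w_i} = \braket{w_i|w_i}\), which is real. Equivalently, \(\bra{w_i}(\ket{\tilde v}-\ket{w_i}) = 0\): each \(\ket{w_i}\) is orthogonal to \(\ket{\tilde v}-\ket{w_i}\).

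The key step is then to define \(\tilde Q_i\) as the rank-one projector onto \(\ket{w_i}\), or \(\tilde Q_i := 0\) when \(\ket{w_i}=0\). Because \(\ket{\tilde v} = \ket{w_i} + (\ket{\tilde v}-\ket{w_i})\) is an orthogonal decomposition, we get \(\tilde Q_i\ket{\tilde v} = \ket{w_i}\). Hence \(\bra{\tilde v}\tilde Q_i\tilde Q_j\ket{\tilde v} = \braket{w_i|w_j} = \tilde M_{ij}\), as required, and the diagonal automatically also matches \(\tilde m\). If one prefers the projectors to act on a larger common space, one can embed \(\mathcal{H}\) into any bigger space without change.

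The only real obstacle is bookkeeping about the statement. The converse presupposes that \(\tilde m\) is the diagonal of \(\tilde M\), and this is exactly what makes \(\ket{w_i}\) the orthogonal projection of \(\ket{\tilde v}\) onto its own span. I would verify that step carefully, including the conjugation convention in \(\tilde m^\dagger\) and the degenerate case \(\tilde m_i = 0\). In that case positivity forces \(\ket{w_i}=0\), so \(\tilde Q_i=0\) works.
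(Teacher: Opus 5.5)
Your proposal is correct and follows the paper's proof essentially step for step. The forward direction uses the Gram matrix of \(\ket{v}, Q_1\ket{v},\ldots,Q_p\ket{v}\). The converse factors the block matrix as a Gram matrix, takes \(\tilde Q_j := \dyad{w_j}/\braket{w_j|w_j}\) (or \(0\) when \(\ket{w_j}=0\)), and uses \(\braket{w_j|\tilde v}=\braket{w_j|w_j}\) to get \(\tilde Q_j\ket{\tilde v}=\ket{w_j}\).
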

\begin{proof}
The block matrix in Eq.~\eqref{eq:e block psd ineq} is the Gram matrix of the vectors \(\ket{v}, Q_1 \ket{v}, \ldots, Q_p \ket{v}\), and so it is positive semidefinite. For the converse, Eq.~\eqref{eq:e block matrix psd converse} implies that there exist vectors \(\ket{w_0},\ldots,\ket{w_p}\) such that \(\braket{w_0|w_0} = 1\), \(\braket{w_j|w_0} = \braket{w_j|w_j}\) and \(\braket{w_i|w_j} = \tilde{M}_{ij}\) for all \(i, j \in [p]\). For \(j \in [p]\) such that \(\ket{w_j} = 0\), define \(\tilde{Q}_j := 0\). Otherwise, define \(\tilde{Q}_j := \frac{\dyad{w_j}}{\braket{w_j | w_j}}\), \(\ket{\tilde{v}} := \ket{w_0}\), and observe that \(\tilde{Q}_j \ket{\tilde{v}} = \frac{1}{\braket{w_j|w_j}} \braket{w_j | w_0} \ket{w_j} = \ket{w_j}\). 
\end{proof}

Consider the Gram matrices
\(N_{n,k} := ( \bra{\Phi_n} Q_{I} Q_{J} \ket{\Phi_n})_{I, J \in
  \binom{[n]}{k}}\) for \(k\in\{0,\ldots,n\}\). We have
\(\tr (N_{n,k})=q_{k}\) and
\(\tr (N_{n,k})\leq \binom{n}{k}\).  Lem.~\ref{lem:e
  block psd} and the bounds for disjoint subsets of the carriers imply
that \(N_{n,k}\) is a feasible point of the following
program:
\begin{align}
\label{eq:e sdp for theta}
\text{Maximize:} \quad & \tr(X) \notag \\
\text{Subject to:} \quad & X_{II} = x_I \quad\text{for all} \quad I \in \binom{[n]}{k}, \notag \\
& |X_{IJ}| \leq  \frac{q_{I}\varepsilon_{I}+q_{J}\varepsilon_{J}}{1 - {1}/{K}} \quad\text{for all}  \quad I, J \in \binom{[n]}{k} \;\; \text{s.t.} \;\; I \cap J = \varnothing, \\
& \begin{pmatrix}
1 & x^\dagger \\
x & X
\end{pmatrix} \geq 0. \notag
\end{align}
It follows then that the contribution of \(\tr (N_{n,k})\) to the overall recovery
probability is upper bounded by the program's value.

For the case of exact simulation, \(\varepsilon_{S}=0\) for all \(S\) and for
\(I, J\) disjoint, the upper bound on \(|X_{IJ}|\) is \(0\).  In
this case, the program's value is the Lov\'asz number of the
disjointness graph of \(\binom{[n]}{k}\)~\cite{grotschel2012geometric}.
See SDP2 in Ref.~\cite{galli2017lovasz}.  For
\(k>n/2\), there are no disjoint sets in \(\binom{[n]}{k}\) and the Lov\'asz number and
optimal value of the program is just \(\binom{n}{k}\). For \(k\leq n/2\), the Lov\'asz number is \(\binom{n-1}{k-1}\), as shown
by Lov\'asz in Ref.~\cite[Thm. 13]{Lovasz1979}.

Define
\(\overline\varepsilon_{k}:=
\binom{n}{k}^{-1}\sum_{S\in\binom{[n]}{k}}q_{S}\varepsilon_{S}=\binom{n}{k}^{-1}q_{k}\varepsilon_{k}\) to be the
average of \(q_{I}\varepsilon_{I}\) over \(k\)-sets \(I\).

\begin{proposition}
\label{prop: program upper bound} 
Let \(X\) be a feasible point of the program~\eqref{eq:e sdp for theta} with \(k\) satisfying \(1\leq k\leq n/2\). Then
  \begin{align}
  \label{eq: epsilon lovasz bound}
    \tr(X) &\leq \frac{1}{2}\binom{n-1}{k-1}\left(
             1+\sqrt{1+\frac{n(n-k)}{k^{2}}8\overline\varepsilon_{k}/(1-1/K)}
             \right).
  \end{align}
\end{proposition}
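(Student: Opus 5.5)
The plan is to adapt Lovász's eigenvalue (Hoffman-type) certificate for the Kneser graph \(K_{n,k}\), whose value is \(\binom{n-1}{k-1}\), so that it tolerates the small but nonzero entries \(X_{IJ}\) on disjoint pairs. This turns the exact-case bound into a quadratic inequality in \(T:=\tr(X)\).

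\textbf{Step 1 (certificate matrix).} Let \(N=\binom{n}{k}\), and let \(A\) be the adjacency matrix of \(K_{n,k}\), so \(A_{IJ}=1\) iff \(I\cap J=\varnothing\). I will use the known spectrum of \(A\) for \(1\le k\le n/2\). The all-ones vector \(\mathbf{1}\) is an eigenvector with eigenvalue \(\binom{n-k}{k}\), and the smallest eigenvalue is \(-\tau\) with \(\tau:=\binom{n-k-1}{k-1}\). Set \(B:=\one+A/\tau\). Then \(B\ge 0\), and \(B\mathbf{1}=(1+\binom{n-k}{k}/\tau)\mathbf{1}=\frac{n}{k}\mathbf{1}\). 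Since every other eigenvalue of \(B\) is nonnegative, this gives
\[
B\geq \frac{n}{k}\,\frac{\mathbf{1}\mathbf{1}^{\dagger}}{N}.
\]

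\textbf{Step 2 (pairing with the feasible point).} The bordered positivity constraint in \eqref{eq:e sdp for theta} is equivalent, by a Schur complement, to \(X-xx^\dagger\ge 0\). Because \(B\ge 0\), we get \(\tr(B(X-xx^\dagger))\ge 0\), that is, \(x^\dagger Bx\le \tr(BX)\).

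For the left side, Step 1 gives
\[
x^\dagger B x\ge \frac{n}{kN}|\mathbf{1}^\dagger x|^2=\frac{T^2}{\theta},
\]
where \(\theta:=\binom{n-1}{k-1}=\frac{k}{n}N\) and we used \(\sum_I x_I=T\).

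For the right side, \(B\) vanishes on intersecting off-diagonal pairs, so
\[
\tr(BX)=T+\frac{1}{\tau}\sum_{I\cap J=\varnothing}X_{IJ}\le T+\frac{E}{\tau},
\]
where \(E\) is the sum of \(|X_{IJ}|\) over ordered disjoint pairs.

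\textbf{Step 3 (bounding \(E\)).} I will apply the disjoint-pair constraint of \eqref{eq:e sdp for theta} termwise. Each \(k\)-set \(I\) is disjoint from exactly \(\binom{n-k}{k}\) others, so summing over ordered disjoint pairs gives
\[
E\le \frac{2\binom{n-k}{k}}{1-1/K}\sum_I q_I\varepsilon_I=\frac{2\binom{n-k}{k}N\,\overline\varepsilon_k}{1-1/K}.
\]
Steps 2 and 3 together yield \(T^2\le\theta T+\theta E/\tau\). Solving this quadratic gives
\[
T\le\frac{\theta}{2}\left(1+\sqrt{1+\frac{4E}{\theta\tau}}\right).
\]
Finally, the ratios \(\binom{n-k}{k}/\tau=(n-k)/k\) and \(N/\theta=n/k\) show that
\[
\frac{4E}{\theta\tau}\le \frac{n(n-k)}{k^2}\,\frac{8\overline\varepsilon_k}{1-1/K},
\]
which is exactly \eqref{eq: epsilon lovasz bound}.

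\textbf{Main obstacle.} The delicate point is Step 1: one needs the exact smallest eigenvalue of the Kneser adjacency matrix, and the positivity of \(B\) must hold with the specific normalization \(\tau\). I would cite Lovász's Thm.~13 proof, which computes the spectrum of \(A\) as \((-1)^j\binom{n-k-j}{k-j}\), and check that the \(j=1\) eigenvalue is the most negative one when \(k\le n/2\).

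Two smaller points also need care. Complex entries of \(X\) are handled by using \(|X_{IJ}|\), together with the fact that \(\tr(BX)\) is real. And the diagonal constraint \(X_{II}=x_I\) is what makes \(\mathbf{1}^\dagger x=T\).
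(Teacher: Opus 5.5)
Your proposal is correct and follows essentially the same route as the paper. Both use the Hoffman-type certificate \(B=\one+A/\tau\geq \frac{n}{kN}\mathbf{1}\mathbf{1}^{\dagger}\), which is the paper's \(B_{n,k}\) divided by \(\tau\), paired with the Schur complement \(X-xx^{\dagger}\geq 0\). Both then bound \(\tr(AX)\) termwise via the disjoint-pair constraints and solve the resulting quadratic in \(\tr(X)\). The only cosmetic difference is that you apply the two positivity facts in the opposite order, \(\tr(BX)\geq x^{\dagger}Bx\geq T^{2}/\theta\), whereas the paper uses \(\tr(BX)\geq c\,\tr(\mathbf{1}\mathbf{1}^{\dagger}X)\geq cT^{2}\); both orders are equally valid.
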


\begin{proof}
  Let \(A_{n,k}\) be the adjacency matrix of the disjointness graph of
  \(\binom{[n]}{k}\). Then \(A_{n,k}\) is the real symmetric matrix whose \(I,J\)
  entry is \(1\) if \(I\cap J=\varnothing\) and \(0\)
  otherwise. We have
  \begin{align}
    \tr(A_{n,k}X)
    &= \sum_{I,J: I\cap J=\varnothing}X_{IJ}\notag\\
    &\leq \sum_{I,J:I\cap J=\varnothing}\frac{q_{I}\varepsilon_{I}+q_{J}\varepsilon_{J}}{1 - {1}/{K}}\notag\\
    &=2\binom{n}{k}\binom{n-k}{k}\frac{\overline\varepsilon_{k}}{1-1/K}.
  \end{align}
  According to the proof of Thm. 13 in Ref.~\cite{Lovasz1979},
  \(A_{n,k}\) has \(k+1\) distinct eigenvalues, which are given by
  \((-1)^{j} \binom{n-k-j}{k-j}\) for \(j \in \{0,1,\ldots,k\}\). The
  most negative eigenvalue of \(A_{n,k}\) is \(-\binom{n-k-1}{k-1}\),
  corresponding to \(j=1\).  Therefore we have
  \(B_{n,k}:=A_{n,k}+\binom{n-k-1}{k-1}\one \geq 0\).  By
  vertex-transitivity of the disjointness graph of \(\binom{[n]}{k}\), the all-ones vector
  \(\bm{1}\) is an eigenvector of \(A_{n,k}\) with
  eigenvalue \(\binom{n-k}{k}\), corresponding to \(j=0\). It follows that \(A_{n,k}\)
  and \(B_{n,k}\) commute with \(\bm{1}\bm{1}^{\dagger}\). Since the
  eigenvalue of \(B_{n,k}\) at \(\bm{1}\) is
  \(\binom{n-k}{k}+\binom{n-k-1}{k-1}=\binom{n-k-1}{k-1}\frac{n}{k}\),
  we have
  \(B_{n,k}\geq
  \binom{n-k-1}{k-1}\frac{n}{k}\binom{n}{k}^{-1}\bm{1} \bm{1}^{\dagger}
  =\binom{n-k-1}{k-1}\binom{n-1}{k-1}^{-1}\bm{1}\bm{1}^{\dagger}\).
  
  The feasibility of \(X\) implies that the Schur complement
  \(X - x x^\dagger\) is positive semidefinite, which implies that
    \(\tr(\mathbf{1} \mathbf{1}^\dagger\, X) \geq \tr (\mathbf{1} \mathbf{1}^\dagger x x^\dagger ) = \tr(X)^2\).
  Since in particular \(X\geq 0\), we obtain the inequality
  \(\tr(B_{n,k}\,X)\geq \binom{n-k-1}{k-1}\binom{n-1}{k-1}^{-1}
  \tr(\bm{1} \bm{1}^\dagger\, X)
  \geq \binom{n-k-1}{k-1}\binom{n-1}{k-1}^{-1}\tr(X)^{2}\).
  After expanding the definition of \(B_{n,k}\) and substituting identities, we get
  \begin{align}
    0
    &\geq \binom{n-k-1}{k-1}\binom{n-1}{k-1}^{-1}\tr(X)^{2} - \tr(B_{n,k}\,X)\notag\\
    &= \binom{n-k-1}{k-1}\binom{n-1}{k-1}^{-1}\tr(X)^{2} - \binom{n-k-1}{k-1}\tr(X)
      - \tr(A_{n,k}\,X)\notag\\
    &\geq \binom{n-k-1}{k-1}\binom{n-1}{k-1}^{-1}\tr(X)^{2} - \binom{n-k-1}{k-1}\tr(X)
      - 2\binom{n}{k}\binom{n-k}{k}\frac{\overline\varepsilon_{k}}{1-1/K}\notag\\
    &=\binom{n-k-1}{k-1}\binom{n-1}{k-1}^{-1}\left(
      \tr(X)^{2}-\binom{n-1}{k-1}\tr(X)- 2\binom{n}{k}\binom{n-1}{k-1}\frac{n-k}{k}\frac{\overline\varepsilon_{k}}{1-1/K}
      \right).
  \end{align}
  The expression in parenthesis is a quadratic polynomial in \(\tr(X)\) which has real roots, one of which is positive and the other is negative. Since the polynomial is not positive at \(\tr(X) = 0\), the upper bound on \(\tr(X)\) is obtained by finding its positive root:
  \begin{align}
    \tr(X)\leq \frac{1}{2}\binom{n-1}{k-1}\left(1+\sqrt{1+8\frac{(n-k)n}{k^{2}}\frac{\overline\varepsilon_{k}}{1-1/K}}\right).  \tag*{\qedhere}
  \end{align}
\end{proof}

We can now upper bound the overall recovery probability. Choose a
constant \(\delta>0\) such that \(1-t+\delta<1/2\) and
\(1-t-\delta>0\). Let \(D\) be the interval \(D:=[(1-t-\delta)n,(1-t+\delta)n]\). The bound is obtained with the following sequence of inequalities. The derivation of the sequence is explained below.
\begin{align}
  (1-r_{n})
  & = \sum_{k=0}^n t^{n-k}(1-t)^{k}q_{k} = \sum_{k=0}^n t^{n-k}(1-t)^{k}\tr (N_{n,k})\\
  &\leq \sum_{k\in D}
    t^{n-k}(1-t)^{k}\frac{1}{2}\binom{n-1}{k-1}\left(1
    +\sqrt{1+8\frac{(n-k)n}{k^{2}}\frac{\overline\varepsilon_{k}}{1-1/K}}\right)
    \label{eq:e final 1}\\
  &\hphantom{\leq\sum_{k\in[(1-t-\delta)n]}}
    + \sum_{k\notin D}t^{n-k}(1-t)^{k}\binom{n}{k}
    \notag\\
  & = (1-t)\sum_{k+1\in D}
    t^{n-k-1}(1-t)^{k}\binom{n-1}{k}\frac{1}{2}\left(1
    +\sqrt{1+8\frac{(n-k-1)n}{(k+1)^{2}}\frac{\overline\varepsilon_{k+1}}{1-1/K}}\right) + o_{n}(1)
    \label{eq:e final 2}\\
  &\leq
    (1-t)\sum_{k+1\in D}
    t^{n-k-1}(1-t)^{k}\binom{n-1}{k}\frac{1}{2}\left(1
    +\sqrt{1+8\frac{t+\delta}{(1-t-\delta)^{2}}\frac{\overline\varepsilon_{k+1}}{1-1/K}}\right) + o_{n}(1)
    \label{eq:e final 3}\\
    &\leq (1-t)\sum_{k=0}^{n-1}
    t^{n-k-1}(1-t)^{k}\binom{n-1}{k}\frac{1}{2}\left(1
+\sqrt{1+8\frac{t+\delta}{(1-t-\delta)^{2}}\frac{\overline\varepsilon_{k+1}}{1-1/K}}\right) + o_{n}(1)
      \label{eq:e final 4}\\
    &\leq (1-t)\frac{1}{2}\left(1
      +\sqrt{1+8\frac{t+\delta}{(1-t-\delta)^{2}}\sum_{k=0}^{n-1}
      t^{n-k-1}(1-t)^{k}\binom{n-1}{k}\frac{\overline\varepsilon_{k+1}}{1-1/K}}\right) + o_{n}(1)
      \label{eq:e final 5}\\
  &=(1-t)\frac{1}{2}\left(1
      +\sqrt{1+8\frac{t+\delta}{(1-t-\delta)^{2}}\sum_{k=0}^{n-1}
    t^{n-k-1}(1-t)^{k}\frac{k+1}{n}\frac{q_{k+1}\varepsilon_{k+1}}{1-1/K}}\right) + o_{n}(1)
    \label{eq:e final 6}\\
  &\leq(1-t)\frac{1}{2}\left(1
      +\sqrt{1+8\frac{t+\delta}{(1-t-\delta)^{2}(1-t)}\sum_{k=0}^{n}
    t^{n-k}(1-t)^{k}\frac{q_{k}\varepsilon_{k}}{1-1/K}}\right) + o_{n}(1)
    \label{eq:e final 7}\\
  &=(1-t)\frac{1}{2}\left(1
      +\sqrt{1+o_{n}(1)}\right) + o_{n}(1)
    =(1-t) + o_{n}(1).
\end{align}
To obtain line~\eqref{eq:e final 1}, we split the sum according to whether
\(k\in D\) or not. For \(k\notin D\), we applied the trivial bound
\(\tr (N_{n,k}) \leq \binom{n}{k}\). For \(k\in D\) we applied the bound
in Eq.~\eqref{eq: epsilon lovasz bound}. The next line~\eqref{eq:e final 2} is obtained by
applying the law of large numbers, according to which
\(\sum_{k\notin D}t^{n-k}(1-t)^{k}\binom{n}{k} = o_{n}(1)\).  At the
same time, we shifted the index \(k\) by \(1\), which causes no
difficulties for large enough \(n\).  For line~\eqref{eq:e final 3},
we replaced \((n-k-1)n/(k+1)^{2}\) by its upper bound according to the
constraint \(k+1\in D\), for which it suffices to substitute the lower
end of \(D\) for \(k+1\). Next, for line~\eqref{eq:e final 4}, we
added positive terms by extending the sum over all
\(k\in\{0,\ldots,n-1\}\).  To obtain line~\eqref{eq:e final 5}, we
interpret \(t^{n-k-1}(1-t)^{k}\binom{n-1}{k}\) as probabilities to
recognize the sum as an average of the expression after these
probabilities. This expression is a concave function of
\(\overline\varepsilon_{k+1}\). The average of the expression is therefore
upper bounded by the expression evaluated at the average.  At
line~\eqref{eq:e final 6} we replaced \(\binom{n-1}{k}\) by
\(\frac{k+1}{n}\binom{n}{k+1}\) and applied the defining identity
for \(\overline\varepsilon_{k+1}\). For the next line, we replaced \(\frac{k+1}{n}\) by \(1\), shifted \(k\) again, pulled out a factor of \(1/(1-t)\), and added a nonnegative \(k=0\) term to the sum under the
square root. For the last step we applied Eq.~\eqref{eq:e qeps  limit}. 

It follows that \(1-r_{n} = (1-r)+o_{n}(1) \leq (1-t)+o_{n}(1)\) so that
\(1-t\geq 1-r\), as promised.

\ifSubfilesClassLoaded{%
  \bibliographystyle{plainurl}
  \bibliography{references}
}{}
\end{document}

\end{document}